\documentclass[twocolumn,aps,pra,preprintnumbers, mathtools, superscriptaddress, amsmath,amssymb]{revtex4}
\usepackage{graphicx}
\usepackage{amsthm, xcolor}

\newcommand*{\Tr}{{\rm Tr}\,}
\newcommand*{\diag}{{\rm diag}\,}
\renewcommand{\Xi}{H}

\newcommand{\ep}{u}

\newtheorem{theorem}{Theorem}

\begin{document}

\author{Denys I. Bondar}
\email{dbondar@tulane.edu}
\homepage{https://bondar.tulane.edu/}
\affiliation{Tulane University, New Orleans, LA 70118, USA}

\author{Alexander N. Pechen}
\email{apechen@gmail.com}
\homepage{www.mathnet.ru/eng/person17991}
\affiliation{Department of Mathematical Methods for Quantum Technologies, Steklov Mathematical Institute of Russian Academy of Sciences, Moscow 119991, Russia}
\affiliation{National University of Science and Technology "MISIS", Moscow 119049, Russia}

\title{Uncomputability and complexity of quantum control}

\date{\today}

\begin{abstract}
In laboratory and numerical experiments, physical quantities are known with a finite precision and described by rational numbers. Based on this, we deduce that quantum control problems both for open and closed systems are in general not algorithmically solvable, i.e., there is no algorithm that can decide whether dynamics of an arbitrary quantum system can be manipulated by accessible external interactions (coherent or dissipative) such that a chosen target reaches a desired value. This conclusion holds even for the relaxed requirement of the target only approximately attaining the desired value. These findings do not preclude an algorithmic solvability for a particular class of quantum control problems. Moreover, any quantum control problem can be made algorithmically solvable if the set of accessible interactions (i.e., controls) is rich enough. To arrive at these results, we develop a technique based on establishing the equivalence between quantum control problems and Diophantine equations, which are polynomial equations with integer coefficients and integer unknowns. In addition to proving uncomputability, this technique allows to construct quantum control problems belonging to different complexity classes. In particular, an example of the control problem involving a two-mode coherent field is shown to be NP-hard, contradicting a widely held believe that two-body problems are easy. 
\end{abstract}

\maketitle

\section*{\large Introduction}

Quantum control aims to find external actions (i.e., control policies) driving the dynamics of a quantum system such that a chosen target reaches a certain value, typically an extrema.
Consider either an open or closed quantum system with the density matrix $\hat{\rho}_t(u)$ at time $t$ evolving under the action of some time-dependent control $u = u(t)$. The following two control tasks play a prominent role: i) \emph{The problem of maximizing the expectation value of an observable} $\hat{O}$ at time $T$ is to find $u$ such that $\Tr[\hat{\rho}_T(u)\hat{O}]\to\max$. ii) \emph{The problem of a target density matrix preparation} $\hat{\rho}_{\rm f}$ is to construct $u$ such that $\| \hat{\rho}_T(u)-\hat{\rho}_{\rm f}\|^2\to\min$. Quantum control is of high interest due to fundamental aspects and many existing and prospective applications in quantum technologies including metrology, information processing, and matter manipulation~\cite{glaser2015training,butkovskiy1990control,shapiro2003principles,tannor2007introduction,fradkov2007cybernetical,d2007introduction,brif2010control,wiseman2009quantum,dong2010quantum}. Controls can be  continuous pulses~\cite{shapiro2003principles,brif2010control}, discrete transformations~\cite{bouten2009discrete,dong2011controllability,ticozzi2018alternating}, or combined continuous and discrete transformations, e.g., with instant quantum measurements~\cite{belavkin1999measurement,pechen2006quantum,shuang2007observation}.

The theory of Diophantine equations appears to be totally unrelated to quantum control. A Diophantine equation, $D(x_1, \ldots, x_n)=0$, is a polynomial equation with integer  coefficients solved with respect to positive integer unknowns $x_1, \ldots, x_n$. We note that this is a much more generic formulation than it originally appears. For example, an exponential Diophantine equation is a Diophantine equation additionally containing at least one term of the form $m^n$, where $m$ and $n$ are nonnegative integers and either both $n$, $m$ or $n$ alone are unknown.  Matiyasevich \cite{matiyasevich1993hilbert, matiyasevich2011can} has surprisingly shown that solving an exponential Diophantine equation is reducible to solving the (polynomial) Diophantine equation. Finding a rational solution of a polynomial equation with rational coefficients is also reducible to solving a Diophantine equation.

Diophantine equations are among the oldest branches of mathematics still actively studied. They also appear in quantum mechanics in a variety of contexts. For example, when deciding whether a quantum transition can be excited by a laser field consisting of $n$ commensurate frequencies~\cite{bondar2009photoelectron}. Utilizing the solutions of the exponential Diophantine Ramanujan-Nagell equation, Pavlyukh and Rau \cite{pavlyukh20131} established that only in the case of one and two qubit systems unitary transformations can be visualized as rotations. Kieu (Sec. 4 of \cite{kieu2003quantum}) noted that a Diophantine equation has a solution if and only if the Hamiltonian $\hat{H} = [D( \hat{a}^{\dagger}_1 \hat{a}_1, \ldots, \hat{a}^{\dagger}_n \hat{a}_n)]^2$ for $n$ noninteracting bosons has the zero ground state. Here $\hat{a}_j$ and $\hat{a}_j^{\dagger}$ are the creation and annihilation operators, respectively, for $j$-th boson. 
\begin{figure}[t]
\includegraphics[width=0.5\textwidth]{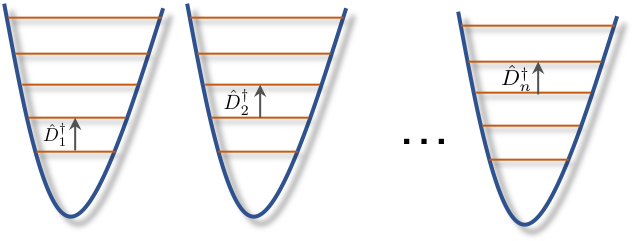}
\caption{A physical system for simulating Diophantine equations with $n$ variables. The system is either $n$ trapped ions or an $n$--mode coherent field. The controls $\hat{D}_1^{\dagger}$, $\ldots$, $\hat{D}_n^{\dagger}$ independently address each subsystem. For ions, the controls excite transitions between nearest levels, and transfer population of the highest excited state down to the ground state. For coherent states, the control for the $i$-th mode is the displacement $\hat D_i$ by the magnitude one. The Diophantine polynomial is embedded in the observable $\hat O$ whose expectation value has to be optimized as the control goal. A highly non-trivial example corresponds already to the simple case of a two-mode coherent field ($n=2$) with $\hat{O} = -(\alpha \hat{a}_1^{\dagger}\hat{a}_1^{\dagger}+ \beta \hat{a}_2^{\dagger} - \gamma ) (\alpha \hat{a}_1\hat{a}_1+ \beta \hat{a}_2 - \gamma )$, where $\alpha,\beta$, and $\gamma$ are positive integers. The observable is non-linear but physical; its leading term is of the Kerr-type nonlinearity. Maximizing the expectation of this observable is NP-hard, i.e., it is at least as hard as the famous Traveling Salesman Problem. Note that an $n=9$ system is sufficient to solve any Diophantine equation.}\label{Fig1}
\end{figure}

Diophantine equations are closely related to the theory of computability. A problem is called computable or decidable if in principle there exists an algorithm solving it. For the analysis of computability, the Turing machine~\cite{Turing1936} is a convenient mathematical model of the intuitive notion of algorithm; thus, the ``algorithm'' and ``computer program'' are used as the synonyms to  the Turing machine throughout. The link between Diophantine equations and computability is established by the Matiyasevich-Robinson-Davis-Putnam theorem, which gaves the negative answer to Hilbert's tenth problem \cite{matiyasevich1993hilbert, matiyasevich2011can}, meaning that there is no algorithm deciding whether an arbitrary given Diophantine equation is solvable. Furthermore, many open mathematical problems, including the Riemann hypothesis specifying the zeros of the Riemann zeta function, can be reformulated as questions about solvability of specially constructed Diophantine equations \cite{matiyasevichriemann}. It is noteworthy that the Riemann zeta function emerges in quantum statistical mechanics~\cite{bost1995hecke, planat2011riemann}, quantum entanglement and coherence~\cite{mack2010riemann, feiler2013entanglement, torosov2013quantum, feiler2015dirichlet}, random matrix theory~\cite{berry1986riemann, berry1999riemann}, string theory and related settings~\cite{lapidus2008search}. This enables a physical assessment of the Riemann hypothesis. Unfortunately, the required physical systems are not available off-the-shelf and need to be finessed, which remains a challenge. Recently a vigorous debate has been initiated by the proposal~\cite{bender2017hamiltonian} to reduce the Riemann hypothesis to the quantization of the classical Hamilton $2xp$.

Physics is also full of noncomputable problems. The undecidability of the presence of chaos in classical Hamiltonian systems has been established in \cite{da1991undecidability}. The problem whether a boolean combination of subspaces (including negations) is reachable by a quantum automation was proved to be undecidable~\cite{li2014decidable}. The question whether a quantum system is gapless also cannot be decided by an algorithm \cite{lloyd1993quantum,cubitt2015undecidability,bausch2018undecidability}. Whether a many-body model is frustration-free is undecidable as well \cite{Cubitt2011}. Smith (Sec. 6 of \cite{smith2006three}) identified a striking physical consequence of the Hilbert's tenth problem that ground state energies and half-life times of excited states are, strictly speaking, non-computable for many-body systems. A variety of seemingly simple problems in quantum information theory has been shown not to be decidable~\cite{wolf2011problems}. The question whether a sequence of outcomes of some sequential measurement cannot be observed is undecidable in quantum mechanics, whereas it is decidable in classical physics \cite{eisert2012quantum}. In this case, the algorithmic undecidability turned out to be the signature of quantumness.

Despite a significant interest to computability of various physical problems, to the best of our knowledge, computability of quantum control has not been systematically studied. The aim of this work is to fill this gap. We establish a connection between optimal quantum control and Diophantine equations and show how the latter emerges in control of various physical systems such as, e.g., a multi-mode coherent field driven by displacement operators of the fixed magnitude (Fig.~\ref{Fig1}). It is noteworthy that Diophantine equations were mentioned in~\cite{wolf2011problems} as a possible tool to analyze computability of quantum information tasks, but it has never been put to use. Furthermore, Theorem 2 in~\cite{wolf2011problems} may be interpreted to imply the undecidability of control task (i) for a finite dimensional quantum system even though ``quantum control'' is never mentioned in the preprint~\cite{wolf2011problems}. In this work, we not only prove uncomputability of quantum control by a different method and generalize to infinite dimensional quantum systems, but also show that even approximate quantum control tasks (i.e., for which it is sufficient to find an optimum with a  $\varepsilon$-accuracy) remain uncomputable as long as $\varepsilon$ is small enough. A general scheme to construct control tasks of various complexity classes is also developed.

In particular, we show that solving a Diophantine equation is equivalent to solving a certain quantum control task, and moreover, any question for which a computer program can give an answer can be stated as a quantum control task. This means that quantum control is Turing complete. In our approach the Diophantine equation is embedded in the target observable $\hat O$ whose expectation value has to be optimized as the control goal. This implies uncomputability of quantum control tasks (i) and (ii) introduced at the beginning. From a pragmatic point of view, this results means that there is no algorithm that outputs ``true'' or ``false'' whether a control sequence composed from a finite set of available controls exists to maximize either the observable's expectation or state-to-state transfer in an arbitrary \emph{generic} case. This, however, does not exclude the possibility that some \emph{particular} classes of control problems can have such an algorithm. The uncomputability motivates use of heuristics, e.g., such as machine learning \cite{palittapongarnpim2017learning}. Moreover, any control problem can be made algorithmically solvable if we deploy a sufficiently flexible controls.

Our technique based on establishing the equivalence between quantum control problems and Diophantine equations also enables knowledge transfer from the complexity theory for Diophantine equations to quantum control theory. In particular, one can construct control problems belonging to various complexity classes. A highly non-trivial example corresponds already to a seemingly simple case of two-mode coherent field ($n=2$) with target observable $\hat{O} = -(\alpha \hat{a}_1^{\dagger}\hat{a}_1^{\dagger}+ \beta \hat{a}_2^{\dagger} - \gamma ) (\alpha \hat{a}_1\hat{a}_1+ \beta \hat{a}_2 - \gamma )$, where $\alpha,\beta$, and $\gamma$ are positive integers. The controlled evolution is represented by a family of simple bosonic Gaussian channels. (Note that bosonic Gaussian channels play an important role in quantum information science~\cite{mari2014quantum}.) Maximizing the expectation of $\hat{O}$ is NP-hard, i.e., it is at least as hard as the famous Traveling Salesman Problem.

The rest of the paper is organized as follows: We proceed by giving a precise mathematical formulation of the quantum control problem. Then, we show how for a given quantum control problem to construct a Diophantine equation whose solution yields the optimal control policy. After that we demonstrate the converse: how to simulate a given Diophantine equation using quantum control. Finally, the uncomputability and complexity of the considered quantum control tasks are discussed.

\section*{\large Results}
\subsection*{Coherent Quantum Control} 

There are two physically distinct types of control regimes:  Coherent control exploits conservative forces, predominately, coherent electromagnetic fields (e.g., MRI and laser pulses)~\cite{shapiro2003principles,tannor2007introduction,dong2010quantum,brif2010control,soltamov2019excitation,niu2019universal,impens2019fast}, whereas quantum reservoir engineering utilizes nonconservative interactions with a thermostat~\cite{pechen2006teaching, gordon2008optimal, verstraete2009quantum, schirmer2010stabilizing, wiseman2011quantum, schmidt2011optimal,pechen2015measurement, koch2016controlling, vuglar2018nonconservative}. These different physical implementations have different mathematical formulations. Coherent control seeks a smooth temporal profile of the electromagnetic field steering dynamics, whereas reservoir engineering tailors a coupling between a thermostat and a controlled system.

Consider coherent control of an $n$-dimensional closed quantum system whose dynamics is governed by the Schr\"odinger equation for the unitary evolution operator $\hat{U}^\ep_t$
\begin{align}
	i\frac{d\hat{U}^\ep_t}{dt}=(\hat{H}_0+u(t)\hat{V})\hat{U}^\ep_t.
\end{align}
Here $\hat{H}_0$ and $\hat{V}$ are the free and interaction Hamiltonians and $\ep(t)$ is a time-dependent control (e.g., a shaped laser pulse). Interaction with several controls can be described similarly. The control $\ep(t)$ is commonly assumed to belong to some functional space, e.g., to the space $L^2([0,\infty])$ of square integrable functions of time. A key notion is the controllability of the system. A (closed) quantum system is pure state controllable if any two pure initial $|\psi_{\rm i}\rangle$ and final $|\psi_{\rm f}\rangle $ states can be connected by some control such that 
$|\psi_{\rm f}\rangle=\hat{U}_t^\ep|\psi_{\rm i}\rangle$ for some $t$ and $\ep$. 
The famous result is that the system is projective state 
controllable if and only if the Lie algebra 
${\rm Lie}\left\{ -i \hat{H}_0, -i \hat{V} \right\}$,
generated by all commutators of operators $-i\hat{H}_0$ and $-i\hat{V}$, is 
isomorphic to the Lie algebra $\mathfrak{sp}(n/2)$ 
or $\mathfrak{su}(n)$ for even $n$, or $\mathfrak{su}(n)$ for odd $n$.

A typical quantum control goal is to maximize objective $J_{\hat O}(\ep)=\langle \hat{O} \rangle$, which is expectation value of an observable $\hat O$. Without the loss of generality, the maximum value of $\langle \hat{O} \rangle$ (i.e., the largest eigenvalue of $\hat{O}$) can be assumed to be zero since adding a constant to $\hat O$ has no physical consequences. If the system is controllable, then this maximum value is attained by some control. Thus, the problem of existence of a globally optimal control $\ep^*$, i.e., such that $J_{\hat O}(\ep^*)=0$, is reduced to calculation of the rank of the Lie algebra ${\rm Lie}\left\{ -i \hat{H}_0, -i \hat{V} \right\}$, which can be done algorithmically for all finite-dimensional quantum controlled systems. In particular, this implies that \emph{establishing controllability of a quantum system is a computable task.}

\subsection*{Digitized Quantum Control} 
The assumption that controls belong to some infinite-dimensional functional space, while convenient mathematically, is not realistic from an engineering perspective. In experiments, one always has a limited finite number of available controls. A typical example is bang-bang control where $u(t)$ admits only two (on/off) values, or a switching control that can take several  values. Any continuous control function in experiments is approximated by some, usually small enough, set of values. Thus in laboratory and numerical experiments, both coherent and incoherent types of controls are digitized, which imposes the discretization and boundedness for the accessible values. The number of available controls  $N$ is always finite, albeit large. Moreover, the measured or computed values of physical quantities have a finite precision, and thus can be represented as rational numbers. The importance of this fact lead to the development of the $p$-adic mathematical physics~\cite{Volovich2010, Dragovich2017}. 

The most general state of a controlled quantum system is represented by a density matrix $\hat{\rho}$, which is a positive trace one operator in the system Hilbert space $\cal H$. The transformation of the system's initial density matrix $\hat{\rho}_0$ into the final density matrix under the action of the $i$-th control ($i=1,\ldots,N$) most generally can be represented by a Kraus map $\Phi_i$, i.e., a completely positive trace preserving transformation~\cite{kraus1983states}. Such maps have a (non-unique) operator-sum representation~\cite{nielsen2002quantum}
\begin{align}\label{Eq:KrausRepN}
	\Phi_i [ \hat\rho_0 ] &= \sum_{j} \hat{K}_{i,j} \hat\rho_0 \hat{K}_{i,j} ^{\dagger}.
\end{align}
Here $\hat K_{i,j}$ are (in general non-commuting) operators in $\cal H$ that satisfy the condition
$
\sum_{j} \hat{K}_{i,j}^{\dagger} \hat{K}_{i,j} = \mathbb{I}
$
to guarantee the trace preservation for the density matrix.

We define the \emph{Digitized Quantum Control} as a task of finding the control policy ${\bf p}$ specified by an integer sequence of length $P$, ${\bf p} = (p_1, p_2, \ldots, p_P) \in AP$, which is from some set of accessible policies $AP$, such that the propagated quantum state $\hat{\rho}({\bf p}) = \Phi_{p_P}\cdots \Phi_{p_1} [ \hat{\rho}_0]$ yields an extremum for a desired objective function. 

In particular, for the problem of preparing a target density matrix, when a quantum system is steered to a desired state $\hat{\rho}_{\rm f}$, one seeks the control policy ${\bf p}$ (if it exists) vanishing the objective function 
\begin{align}\label{Eq:OptProblemRho}
	F_{\hat\rho_{\rm f}}({\bf p}) = \| \hat{\rho}({\bf p}) -\hat{\rho}_{\rm f}\|^2.
\end{align}
This objective equals to zero if and only if $\hat{\rho}({\bf p})=\hat{\rho}_{\rm f}$.
A special yet equally important instance of the quantum state preparation problem is the problem of maximizing the expectation value of an observable $ \hat{O}$. Without the loss of generality, the maximum value of $\langle \hat{O} \rangle$ is set to zero. In this case, the goal is to find a control policy vanishing the objective function
\begin{align}\label{Eq:OptProblemFormulation}
	J_{\hat{O}}({\bf p})=\Tr\left( \hat{O} \hat{\rho}({\bf p}) \right).
\end{align}
A particular case is the problem of steering an initial pure state $|\psi_{\rm i}\rangle$ into a final pure state $|\psi_{\rm f}\rangle$. For this problem, $\hat O=|\psi_{\rm f}\rangle\langle\psi_{\rm f}|$ is the projector onto the final state and $\hat\rho_0=|\psi_{\rm i}\rangle\langle\psi_{\rm i}|$ is the projector on the initial state, and the objective becomes the transition probability, $J_{|\psi_{\rm f}\rangle\langle\psi_{\rm f}|}=P_{\rm i\to\rm f}=\langle\psi_{\rm f}|\hat\rho({\bf p})|\psi_{\rm f}\rangle$.

The functions \eqref{Eq:OptProblemRho} and \eqref{Eq:OptProblemFormulation} are related by the equality $F_{\hat\rho_{\rm f}}({\bf p})=\|\hat\rho_{\rm f}\|^2+\|\hat\rho({\bf p})\|^2-2J_{\hat\rho_{\rm f}}({\bf p})$, where $\|\hat\rho_{\rm f}\|^2$ is a constant independent of the control policy. In the general case, the problem of minimizing $F$ cannot be reduced to maximizing $J$, as illustrated by the example of a qubit with $\hat\rho_{\rm f}=\mathbb I/2$ for which  $J_{\hat\rho_{\rm f}}({\bf p})=1/2$ {is constant for any $\bf p$, while $F_{\hat\rho_{\rm f}}({\bf p})={\rm const}+\|\hat\rho({\bf p})\|^2$ is non-constant. However, in the case of pure initial $| \psi_{\rm i} \rangle$ and final  $| \psi_{\rm f}\rangle$ states  and controls restricted to unitary transformations, the problem \eqref{Eq:OptProblemRho} reduces to \eqref{Eq:OptProblemFormulation} with $\hat{O} = | \psi_{\rm f} \rangle\langle \psi_{\rm f}|-1$. Indeed, according to the Cauchy-Schwarz inequality, for an arbitrary state $| \phi \rangle$, $\langle \hat{O} \rangle = \langle \phi | \hat{O} | \phi \rangle \leq 0$ and the equality $\langle \hat{O} \rangle = 0$ takes place if and only if $| \phi \rangle = | \psi_{\rm f} \rangle$. The latter guarantees that the desired final state $| \psi_{\rm f}\rangle$ is reached once maximization of \eqref{Eq:OptProblemFormulation} is converged. Note that problem \eqref{Eq:OptProblemFormulation} is a special case of  problem \eqref{Eq:OptProblemRho} since minimizing \eqref{Eq:OptProblemFormulation} is equivalent to minimizing \eqref{Eq:OptProblemRho} with $\hat{\rho}_{\rm f}$ chosen as the projector onto the eigenstate corresponding to the largest eigenvalue of  $\hat O$.

Let us give examples of digitized quantum control. The simplest example is bang-bang control, which is switched on and off at some time instants, or more generally, is switched between two (maximal and minimal) values of its amplitude. In coherent control setting, steering quantum dynamics is achieved by tailoring the time profile of a laser pulse, whose intensity and bandwidth should not exceed engineering capabilities. The temporal form of the laser pulse can have the form $u(t)=\sum_{j=1}^P A_j\chi_{[t_j,t_{j+1}]}(t)$, where  $\chi_{[t_j,t_{j+1}]}$ is the characteristic function of the fixed time interval $[t_j,t_{j+1}]$ and $A_j$ is the pulse intensity at  the $j$-th time interval to be chosen among $N$ available pulse intensities. Another example is the field of the form $u(t)=\sum_{j=1}^P A_j\cos\omega_j t$, where $\omega_i$ are some fixed frequencies and the amplitudes $A_j$ are sought controls. In both cases the set  $AP$ of all attainable laser pulses has $N^P$ elements. In quantum computing, digitized quantum control mimics the problem of building a desired unitary transformation of a multi-qubit state using a finite number of universal quantum gates~\cite{lloyd1995almost,deutsch1995universality}, as well as various problems of finite computability~\cite{Jeandel2004universality}. The discrete-time control~\cite{belavkin1999measurement,pechen2006quantum,bouten2009discrete,dong2011controllability,ticozzi2018alternating} is also a particular type of digitized quantum control.

The digitized quantum control describes a very wide class of quantum control problems and has the following generic properties: 

(i)~The optimization problem \eqref{Eq:OptProblemFormulation}, in general, cannot be solved by the control policy of a finite length (see Theorem~\ref{Thm1} in Methods). This results follows from the fact that there is a continuum of digitized quantum control formulations, while finite-length controls form at most a countable set.

(ii)~For any observable and an arbitrary initial state, the relaxed condition $J_{\hat O} \approx 0$ can be satisfied with any desired error for a control policy of a finite length if the set of controls is rich enough (see~Theorem~\ref{Thm2} in Methods). For example, one can use the dissipative interaction to cool the quantum system to the ground state, and then rotate this state using a tailored unitary transformation constructed from a set of universal quantum gates to the state with $\langle \hat{O} \rangle \approx 0$ (i.e., to the eigenstate of $\hat O$ corresponding to the largest eigenvalue). 

According to the first property, time-discretization alone makes a quantum control problem ill-posed. For example, for closed systems there is no finite set of elementary unitary transformations that would exactly map a given pure initial state into a given pure target state. However, exact attainability of the target state is not required in practice. It is always sufficient to steer the system into a small neighborhood of the target state. The digitization, combining discretization and a finite precision, makes the problem well posed as per the second property. Theorem~\ref{Thm2} shows that for some (specially constructed) set of accessible controls such an approximate attainability is achievable in a finite number of steps. The finiteness, in turn, automatically implies algorithmic solvability since straightforward, albeit time consuming, looping through all the allowed control policies is guaranteed to terminate with uncovering the sought control strategy. The algorithmic solvability for this special control set, however, does not imply computability for any other set. In practical terms, Theorem~\ref{Thm2} means that any approximate quantum control task can be made solvable if the set of accessible controls is rich enough.

\subsection*{Reduction of digitized quantum control to a Diophantine equation}

As discussed above, in laboratory and numerical experiments elements of the matrices $\hat{K}_{i,k}$, $\hat{\rho}_0$, $\hat\rho_{\rm f}$ and $\hat{O}$ are complex numbers with rational imaginary and real parts. Using this fact, consider the matrix valued polynomials of the positive integer argument $i$
\begin{align}\label{Eq:LagrangeInterp}
	\hat{\phi}_j(i) = \sum_{l=1}^N \hat{K}_{l,j} \prod_{m=1, \, l \neq m}^N \frac{i-m}{l-m}.
\end{align}
By construction $\hat{\phi}_j(i) \equiv \hat{K}_{i,j}$ for $1 \leq i \leq N$. For every fixed $j$, equation \eqref{Eq:LagrangeInterp} is a matrix-valued Lagrange interpolation polynomial of variable $i$ passing trough $N$ points $(1,\hat K_{1,j}),\dots,(N,\hat K_{N,j})$. The objective function $F_{\hat\rho_{\rm f}}$ from Eq. \eqref{Eq:OptProblemRho} reduces to the following polynomial with rational coefficients of $P$ positive integer arguments
\begin{align}\label{Eq:DiophantineReductionF}
	\mathfrak{F}({\bf p}) = \left\| 
		\sum_{k_1, \dots, k_P} \left( \prod_{l=P}^1 \hat{\phi}_{k_l} (p_l) \right) \hat{\rho}_0 \left( \prod_{m=P}^1 \hat{\phi}_{k_m} (p_m) \right)^{\dagger} 
		- \hat{\rho}_{\rm f}
	\right\|^2 .
\end{align}
Finally, the policy ${\bf p}$ solves the state preparation problem  \eqref{Eq:OptProblemRho} if and only if it solves the Diophantine equation 
\begin{align}\label{Eq:DiophantineFEq}
	 \mathfrak{F}({\bf p})
	+ \prod_{{\bf p'} \in AC}  \sum_{k=1}^P \left( p_k - p'_k \right)^2 
	 = 0,
\end{align}
Both the terms in Eq. \eqref{Eq:DiophantineFEq} are non-negative. The sum equals to zero only if each term vanishes. The first term is zero if and only if the control policy ${\bf p}$ is optimal, i.e.,  $\hat\rho({\bf p})=\hat\rho_{\rm f}$. The last term equals to zero if and only if the control policy ${\bf p}$ is an accessible control, i.e., its components satisfy $p_k=p'_k$ for some ${\bf p'}\in AC$.

In the similar fashion, the policy ${\bf p}$ solves the problem of maximizing the expectation value \eqref{Eq:OptProblemFormulation} if and only if it solves the Diophantine equation 
\begin{align}\label{Eq:DiophantineQEq}
	 \mathfrak{J}^2({\bf p})
	+ \prod_{{\bf p'} \in AC}  \sum_{k=1}^P \left( p_k - p'_k \right)^2 
	 = 0,
\end{align}
where the reduction of the objective function $J_{\hat{O}}$ to a polynomial with rational coefficients reads
\begin{align}\label{Eq:DiophantineReduction1}
	\mathfrak{J}({\bf p}) = \sum_{k_1, \dots, k_P} \Tr\left( \hat{O} \left[ \prod_{l=P}^1 \hat{\phi}_{k_l} (p_l) \right] \hat{\rho}_0 \left[ \prod_{m=P}^1 \hat{\phi}_{k_m} (p_m) \right]^{\dagger} \right).
\end{align}
Note that there are many other ways to construct polynomials~(\ref{Eq:DiophantineReductionF}) and~(\ref{Eq:DiophantineReduction1}).

We now show that the problem of finding a control policy bringing the value of an objective function approximately close to the extremum (with a desired degree of accuracy) is also reducible to solving a Diophantine equation. To present the derivation in a unified fashion, let $\mathfrak{G}({\bf p})$ denote either $\mathfrak{J}^2({\bf p})$  or $\mathfrak{F}({\bf p})$. By definition $\mathfrak{G}({\bf p}) \geq 0$ [i.e., it follows from Eqs.~\eqref{Eq:DiophantineReduction1} and \eqref{Eq:DiophantineReductionF}]. Thus, reaching close to the optimum means that for a given  rational $\varepsilon > 0$, $\mathfrak{G}({\bf p}) < \varepsilon$. Using the methods presented in Ch.~1 of Ref. \cite{matiyasevich1993hilbert}, we observe that the inequality is satisfied if and only if there are \emph{positive} integers $a$ and $b$ such that $\mathfrak{G}({\bf p}) + a/b = \varepsilon$. Lagrange's four-square theorem, stating that a positive integer can be written as the sum of four integer squares, lifts the requirement of the positivity of $a$ and $b$, leading to the following Diophantine equation with ancillary unknowns $a_1, \ldots, a_4, b_1, \ldots, b_4$:
\begin{align}
	\left(b_1^2 + b_2^2 + b_3^2 + b_4^2 \right)\left(\mathfrak{G}({\bf p}) - \varepsilon\right)
	+ a_1^2 + a_2^2 + a_3^2 +a_4^2 = 0.
\end{align}
Finally, by constraining control policies to the accessible set $AP$, we obtain the sought Diophantine equation for the control policy ${\bf p}$ yielding the optimum within an $\varepsilon$-accuracy
\begin{align}
	&\left[
		\left(b_1^2 + b_2^2 + b_3^2 + b_4^2 \right)\left(\mathfrak{G}({\bf p}) - \varepsilon\right)
		+ a_1^2 + a_2^2 + a_3^2 +a_4^2 
	\right]^2 \notag\\
	& \qquad + \prod_{{\bf p'} \in AC}  \sum_{k=1}^P \left( p_k - p'_k \right)^2  = 0.
\end{align}

\subsection*{Simulation of a Diophantine equation with digitized quantum control}

Here we show how to simulate the problem of finding positive integer solutions of a Diophantine equation $D(x_1, \ldots, x_n) = 0$ with digitized quantum control. Let us introduce $X$-dimensional vectors $| e_k \rangle$ containing $1$ in the $k$-th position and zeros elsewhere, the matrix $\hat{\Xi} =\diag(1, 2, \ldots, X)$ and the unitary $X \times X$ shift matrix
\begin{align}
	\hat{\Sigma} = \begin{pmatrix}
			0	&   &	 &  1 \\
			1	& 0	&  & \\
				  & \ddots & \ddots & \\
					&	& 1	& 0
		\end{pmatrix}
\end{align}
obeying $\hat{\Sigma} | e_k \rangle =| e_{k+1} \rangle$, where $| e_{X+1} \rangle = | e_1 \rangle$ is assumed. Define also for $l=1,\ldots,n$
\begin{align}
	\left( \hat{\Xi}_l \atop \hat{\Sigma}_l \right) = \bigotimes_{q=1}^{l-1} \hat{1}
		\otimes \left( \hat{\Xi} \atop \hat{\Sigma} \right)  
		\bigotimes_{q=1}^{n-l} \hat{1}.
\end{align}
Since all matrices $\hat{\Xi}_l$ commute by construction, the eigenvalues and eigenvectors of the matrix $\hat{D} = D(\hat{\Xi}_1, \ldots, \hat{\Xi}_n)$ are given by
\begin{align}
	 \left( D(\hat{\Xi}_1, \ldots, \hat{\Xi}_n)
	-D(x_1, \ldots, x_n) \right) | e_{x_1}, \ldots, e_{x_n} \rangle = 0,
\end{align}
where $1 \leq x_l \leq X$. This relation allows to formulate the equivalence between Diophantine equations and quantum control:

A Diophantine equation $D(x_1, \ldots, x_n) = 0$ has a positive integer solution with $1 \leq x_l \leq X$ if and only if
the control problem \eqref{Eq:OptProblemFormulation} with $\hat{O} = -D(\hat{\Xi}_1, \ldots, \hat{\Xi}_n)^2$,
$\hat{\rho}_0 = \bigotimes_{l=1}^n | e_1 \rangle\langle e_1|$, $\Phi_0 [ \hat\rho ] = \hat\rho$, $\Phi_l [ \hat\rho ] = \hat{\Sigma}_l \hat\rho \hat{\Sigma}_l^{\dagger}$, $l=1,\ldots,n$ has a policy ${\bf p}$ yielding $J_{\hat O}=0$. The set of accessible policies is $AP = [0, 1, 2, \ldots, n]^Q$, where $Q \geq nX$. 

The motivation for this construction is as follows (see also Fig.~\ref{Fig1}): The vector $| e_k \rangle$ encodes integer $k$ as $\hat{\Xi} | e_k \rangle = k | e_k \rangle$; similarly, vector $| e_{x_1}, \ldots, e_{x_n} \rangle$ encodes an integer tuple $(x_1, \ldots, x_n)$. The initial density matrix $\hat{\rho}_0$ represents the $n$-tuple $(1, \ldots, 1)$. The action of each $\Phi_l$ onto a density matrix encoding a tuple $(x_1, \ldots, x_n)$ is equivalent to the operation $x_l \to x_l + 1$ of incrementing the tuple's $l$-th component. To scan all the values of $x_l$ from $1$ to $X$, $\Phi_l$ needs to sequentially act  $X$ times onto $\hat{\rho}_0$. Thus, the length of the policy should be at least $nX$ to scan through all possible combinations of the $n$ variables.  The trivial identity transformation $\Phi_0$ (not modifying the density matrix) is employed due to the following reason: Assume the value of the $l'$-th component of the solution of the Diophantine equation is $L < X$, then 
$\Phi_{l'}$ should be used only $L$ times followed by $(X-L)$ applications of $\Phi_0$. 

The construction above employs only unitary operations. However, the described method can be adopted to use the amplitude damping Kraus maps \cite{nielsen2002quantum}. Consider the Kraus map 
$\Phi [ \hat\rho ] = \sum_{i=1}^X \hat{K}_i \hat\rho  \hat{K}_i^{\dagger}$, where
$\hat{K}_1 = | e_1 \rangle\langle e_1 |$, 
$\hat{K}_i =  | e_{i-1} \rangle\langle e_i |$, $i=2,\ldots,X$ obeying $\Phi(  | e_k \rangle\langle e_k | ) = | e_{\max(1, k-1)} \rangle\langle e_{\max(1, k-1)} |$, which moves the population from $k$ to $(k-1)$ level with the first level being the stationary state. Then a Diophantine equation $D(x_1, \ldots, x_n) = 0$ has a positive integer solution, $1 \leq x_l \leq X$, if and only if
the optimization problem \eqref{Eq:OptProblemFormulation} with $\hat{O} = -D(\hat{\Xi}_1, \ldots, \hat{\Xi}_n)^2$,
$\hat{\rho}_0 = \bigotimes_{l=1}^n | e_X  \rangle\langle e_X |$, $\Phi_0 [ \hat\rho ] = \hat\rho$, $\Phi_i [ \hat\rho] = \sum_{j=1}^X \hat{K}_{i,j} \hat\rho\hat{K}_{i,j}^{\dagger}$,
$
\hat{K}_{l,k} = \bigotimes_{q=1}^{l-1} \hat{1}\otimes \hat{K}_k  \bigotimes_{q=1}^{n-l} \hat{1}
$, $l=1,\ldots,n$ has a policy ${\bf p}$ yielding $J_{\hat{O}}=0$. The set of accessible policies is $AP = [0, 1, 2, \ldots, n]^Q$, where $Q \geq nX$. 

The two presented constructions have a drawback that they involve the upper bound $X$ for a sought solution. The following third reduction of the Diophantine equation into digitized quantum control uses multi-mode coherent states and overcomes this limitation (see also Fig.~\ref{Fig1}): 
\begin{theorem}
A Diophantine equation $D(x_1, \ldots, x_n) = 0$ is solvable in nonnegative integers if the optimization problem \eqref{Eq:OptProblemFormulation} with $\hat{\rho}_0 = | 0, \ldots, 0\rangle\langle 0, \ldots, 0|$, $\Phi_l [ \hat\rho ] = \hat{D}_l^{} \hat\rho \hat{D}_l^{\dagger}$,  $\hat{D}_l = \exp(\hat{a}_l^{\dagger} - \hat{a}_l)$ and a control policy of an arbitrary length  achieves $J_{\hat{O}} = 0$ for the observable  $\hat{O} = -D(  \hat{a}_1, \ldots,  \hat{a}_n)^{\dagger} D(  \hat{a}_1, \ldots,  \hat{a}_n)$. 
\end{theorem}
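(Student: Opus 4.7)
The plan is to show that any control policy in this setup drives the vacuum into a multi-mode product coherent state whose amplitudes are exactly nonnegative integers, and that the expectation value of $\hat{O}$ on such a state equals $-|D(m_1,\dots,m_n)|^2$ evaluated at those integer amplitudes. The equivalence between $J_{\hat{O}}=0$ and the existence of a nonnegative integer solution of $D=0$ then follows immediately.

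First I would analyze the dynamics. The single-mode displacement $\hat{D}(\alpha) = \exp(\alpha \hat{a}^{\dagger} - \alpha^* \hat{a})$ satisfies the composition rule $\hat{D}(\alpha)\hat{D}(\beta) = e^{(\alpha \bar\beta - \bar\alpha \beta)/2}\hat{D}(\alpha+\beta)$, which for the unit real shift $\alpha=\beta=1$ reduces to $\hat{D}_l^m = \hat{D}(m)$ on mode $l$ without any phase. Since displacements on different modes commute, an arbitrary policy ${\bf p}=(p_1,\dots,p_Q)$ with $p_k\in\{1,\dots,n\}$ applied to the vacuum produces the pure product coherent state $|m_1,\dots,m_n\rangle$, where $m_l = |\{k : p_k = l\}|$ is a nonnegative integer and $\sum_l m_l = Q$. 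Since $Q$ is unrestricted and the $m_l$ are allowed to be arbitrary nonnegative integers by appropriate choice of ${\bf p}$, every tuple in $\mathbb{Z}_{\ge 0}^n$ is attainable.

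Next I would use the coherent-state eigenvalue property $\hat{a}_l|m_1,\dots,m_n\rangle = m_l|m_1,\dots,m_n\rangle$. Because $D$ has integer (hence real) coefficients and involves only the commuting annihilation operators, the polynomial acts diagonally: $D(\hat{a}_1,\dots,\hat{a}_n)|m_1,\dots,m_n\rangle = D(m_1,\dots,m_n)|m_1,\dots,m_n\rangle$. Taking adjoints and using $\langle m_1,\dots,m_n|\hat{a}_l^{\dagger} = m_l\,\langle m_1,\dots,m_n|$, I would then compute
\begin{align}
J_{\hat{O}}({\bf p}) &= -\langle m_1,\dots,m_n| D^{\dagger}D |m_1,\dots,m_n\rangle \notag\\
&= -\bigl|D(m_1,\dots,m_n)\bigr|^2.
\end{align}
Since $\hat{O}\le 0$ as an operator, this also certifies that $J_{\hat O}=0$ is a global maximum.

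Finally I would conclude the equivalence. The identity above shows $J_{\hat{O}}({\bf p}) = 0$ if and only if $D(m_1,\dots,m_n) = 0$ for the tuple of occurrence counts induced by ${\bf p}$. Combined with the surjectivity observation from the first step, this gives both directions: a policy reaching $J_{\hat O}=0$ exhibits an explicit nonnegative integer solution, and conversely any nonnegative integer solution $(m_1,\dots,m_n)$ can be realized by choosing ${\bf p}$ to contain each index $l$ exactly $m_l$ times. I do not anticipate a serious obstacle; the only technical care needed is in verifying the absence of spurious phases in composing the unit-amplitude displacement operators across the policy, which is immediate because the symplectic form $\alpha\bar\beta-\bar\alpha\beta$ vanishes for real amplitudes.
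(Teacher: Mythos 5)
Your proposal is correct and follows essentially the same route as the paper: repeated unit displacements map the vacuum to product coherent states with nonnegative integer amplitudes, on which $\hat{O}$ acts with eigenvalue $-|D(m_1,\dots,m_n)|^2$, so $J_{\hat O}=0$ is attained exactly when the occurrence counts solve the Diophantine equation. You merely make explicit two details the paper leaves implicit (the vanishing of composition phases for real displacements and the surjectivity of policies onto $\mathbb{Z}_{\ge 0}^n$), which is a welcome tightening rather than a different argument.
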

The controlled Kraus map $\Phi_l$ is a bosonic Gaussian channel~\cite{mari2014quantum}. Here $| \alpha_1, \ldots, \alpha_n \rangle$ is a composite coherent state: $\hat{a}_l | \alpha_1, \ldots, \alpha_n \rangle = \alpha_l | \alpha_1, \ldots, \alpha_n \rangle$, so that $D(  \hat{a}_1, \ldots,  \hat{a}_n) | \alpha_1, \ldots, \alpha_n \rangle = D(\alpha_1, \ldots, \alpha_n) | \alpha_1, \ldots, \alpha_n \rangle$. The displacement operator $\hat{D}_l$ acts on $l$-th mode as $\hat{D}_l | \ldots, \alpha_l, \ldots \rangle = | \ldots, \alpha_l + 1, \ldots \rangle$ and describes the increase of the laser intensity by the magnitude one for the $l$-th mode without altering the phase. Thus, the maximum of the objective function, $J_{\hat O}=0$, is reached only by the coherent state $| x_1, \ldots, x_n \rangle$ such that $D(x_1, \ldots, x_n) = 0$. Unlike number states, which are difficult to create experimentally, this reduction uses only easily available coherent states describing laser radiation. The presented formulation is open to a number of generalizations. 

Solving a Diophantine equation can also be reduced to solving an optimization problem with an $\varepsilon$-accuracy for a sufficiently small $\varepsilon$. This readily follows from the above constructions. Chosen controls and initial quantum state ensure that $\langle \hat{O} \rangle$ takes only non-positive integer values during  optimization. Thus, for any $0 < \varepsilon < 1$, finding a control policy that $ -\varepsilon\le \langle \hat{O} \rangle\le 0$ (i.e., optimal with an $\varepsilon$-accuracy) is equivalent to reaching the maximum $\langle \hat{O} \rangle = 0$. This conclusion holds for $\varepsilon<1$ which is relatively small in comparison with typical values of the objective (i.e., values of the Diophantine polynomial evaluated at integer arguments). If the allowed error is large enough then a separate analysis is needed, which is beyond the scope of this work.

The found equivalence of Diophantine equations and digitized quantum control  employs commuting Kraus maps. They can be viewed as a faithful matrix formulation of the special Turing machine that was constructed in section 5.4 of \cite{matiyasevich1993hilbert}. This Turing machine for any given Diophantine equation loops through all the tuples of positive integers and halts when a solution is found. The formulation obtained in the present work is not necessarily optimal, e.g., in the number of used Kraus maps. Constructing more compact representations relying on non-commutative operators and using quantum interferences should be a subject of future work. Furthermore, the presented reduction transfers the complexity of a Diophantine equation into the observable while keeping controls simple. Different constructions that distribute the complexity between the control and observable should be investigated.

By including additional unknowns any Diophantine equation can be transformed to the equivalent forth order equation~\cite{matiyasevich1993hilbert}. The number of unknowns can be decreased down to 9 by increasing the degree of the Diophantine polynomial~\cite{jones1982universal}. These observations imply that for digitized quantum control with multimode coherent states, it is always sufficient to use no more than 9 modes by utilizing a higher order polynomial observable; whereas, the order of nonlinearities can be decreased to 4 by increasing the number of modes.

\subsection*{Uncomputability and complexity of digitized quantum control}

The Matiyasevich-Robinson-Davis-Putnam theorem~\cite{matiyasevich1993hilbert, matiyasevich2011can} uncovers an equivalence between sets of solutions of Diophantine equations and sets of outputs of computer programs, which are allowed to run forever. The reduction of a computer program to the corresponding Diophantine equation is constructive. The Matiyasevich-Robinson-Davis-Putnam theorem leads to the negative resolution of Hilbert's tenth problem \cite{matiyasevich1993hilbert, matiyasevich2011can}, meaning that the solvability of an arbitrary Diophantine equation is not decidable.

The negative resolution of Hilbert's tenth problem also implies the uncomputability of digitized quantum control. Moreover, digitized quantum control with an $\varepsilon$-accuracy for small enough $\varepsilon$ are also uncomputable.

Thus, the problem of maximizing an expectation value~\eqref{Eq:OptProblemFormulation} is undecidable and so is the problem of quantum state preparation \eqref{Eq:OptProblemRho} since the former is a special case of the latter. This means that there is no algorithm deciding on the existence or non-existence of an optimal control solution for an arbitrary digitized quantum control problem. This finding does not preclude an algorithmic solution for a particular digitized quantum control problem. It is noteworthy that tracking the time-evolution of an observable \cite{campos2017make, magann2018singularity} is a manifestly algorithmically solvable quantum control problem, which nevertheless cannot be reduced to either objective function \eqref{Eq:OptProblemRho} or \eqref{Eq:OptProblemFormulation}.

The established equivalence between digitized quantum control and Diophantine equations can be used to synthesize quantum control problems belonging to a certain computational complexity class. For example, finding an optimal control policy to reach $\langle \hat{O} \rangle = 0$ with $\hat{O} = -(\alpha \hat{a}_1^{\dagger}\hat{a}_1^{\dagger}+ \beta \hat{a}_2^{\dagger} - \gamma ) (\alpha \hat{a}_1\hat{a}_1+ \beta \hat{a}_2 - \gamma )$ is NP-hard. This is a consequence of the fact that it is an NP-complete problem to decide the solvability of the Diophantine equation $\alpha x_1^2 + \beta x_2 = \gamma$ with respect to $x_1$ and $x_2$ \cite{adleman1977taking}. Therefore, this digitized quantum control problem is at least as hard as the celebrated Traveling Salesman Problem. Note that the leading nonlinearity in $\hat{O}$ is of the Kerr type (see, e.g.,~\cite{sanders1992quantum,puri2017engineering}), which makes this proposal of experimental interest. 

On the contrary, our technique can also be used to construct simple quantum control problems. For example, finding an optimal control policy to reach $\langle \hat{O} \rangle = 0$ with $\hat{O} = -(\hat{a}_1^{\dagger}\hat{a}_1^{\dagger}-n^2 \hat{a}_2^{\dagger}\hat{a}_2^{\dagger} - 1) (\hat{a}_1\hat{a}_1-n^2 \hat{a}_2\hat{a}_2 - 1)$ is easy because $x_1=1, x_2=0$ is the only solution of the Pell's Diophantine equation $x_1^2 - n^2 x_2^2 = 1$ in non-negative integers. 

It is worth comparing our findings with the theory of quantum control landscapes~\cite{rabitz2004quantum}, which studies the objective, e.g., $J_{\hat O}(u)$, as a functional of the control $u=u(t)$, which is an arbitrary time dependent function not restricted to integer sequences. If the objective has only global maxima (i.e., local maxima are absent), then a gradient algorithm converges to an optimal control. In this work, we consider a different situation when there is a finite number of basic elementary controls that can be applied multiple times and in an arbitrary time order. This is the case of digitized quantum control, for which no algorithm can find a solution in the general case. In Ref. \cite{day2019glassy}, it has been shown how the control landscape of digitized quantum control problems can be mapped to a classical Ising model, which among other things reveals that the connection between easy and hard control tasks is akin to a phase transition.

Our reduction of Diophantine equations to quantum control problems allows to formulate various mathematical conjectures as, perhaps extremely complex, quantum control tasks, thereby in principle providing a route for experimental assessment for the unsolved conjectures. As an example, consider the Riemann hypothesis. Matiyasevich wrote the explicit form of the Diophantine equation with the property that it has infinitely many solutions if the hypothesis is false and has no solution if the hypothesis is true~\cite{matiyasevichriemann}. Using our approach the Riemann hypothesis can be reformulates as some  quantum control problem albeit extremely complex. Moreover, quantum control can be applied to evaluate any mathematical expression formed from arithmetic ($+$, $\times$, $-$, $=$) and logical ($>$, ``and'', ``or'') operations, existential quantifiers (e.g., $\exists x$ -- there exists $x$), and bounded universal quantifiers (e.g., $\forall x < M$ -- for all $x$ less than $M$). A constructive proof that such expressions are equivalent to solving Diophantine equations can be found in chapters 1 and 6 of \cite{matiyasevich1993hilbert}. 

\section*{\large Discussion}

Computability of quantum control problems has been analyzed. A realistic situation, when a number of controls is finite, has been considered. We have shown that within this setting solving quantum control problems is equivalent to solving Diophantine equations. As a consequence, quantum control is Turing complete. The established equivalence is a new technique for quantum technology that, e.g., allows to construct quantum problems belonging to a specific complexity class. Examples of a multimode coherent field control are explicitly constructed. The negative answer to the Hilbert's tenth problem implies that there is no algorithm deciding whether there is a control policy connecting two quantum states represented by arbitrary pure or mixed density matrices, i.e., the most general fixed-time quantum state-to-state control problem is not algorithmically solvable. This result applies to the problems of finding exact as well approximate solutions for sufficiently small errors. Our method opens up an opportunity to recast many open mathematical problems, including the Riemann hypothesis, as quantum control tasks. The uncovered non-algorithmic nature makes quantum control a fruitful research area. 

\section*{\large Methods}

We shall prove two theorems elucidating properties of digitized quantum control.
\begin{theorem}\label{Thm1}
There are digitized quantum control problems for which the condition $J_{\hat O}({\bf p})=J_0$ for some $J_0$ such that $O_{\rm min}\le J_0\le O_{\rm max}$ ($O_{\rm min}$ and $O_{\rm max}$ are minimal and maximal eigenvalues of $\hat O$) is never satisfied by any control policy of finite length. 
\end{theorem}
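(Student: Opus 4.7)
The plan is to prove the theorem by a simple cardinality argument, exactly as hinted in the remark preceding the statement: finite-length control policies form a countable set, whereas the interval of admissible values of $J_{\hat O}$ is uncountable whenever $\hat O$ is non-scalar.

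First I would fix any digitized quantum control problem in which $\hat O$ has at least two distinct eigenvalues, so that $O_{\rm min}<O_{\rm max}$ and the target interval $[O_{\rm min},O_{\rm max}]$ is genuinely uncountable. Note that for every policy $\bf p$ the state $\hat\rho({\bf p})$ is a density matrix, hence $J_{\hat O}({\bf p})=\Tr(\hat O\hat\rho({\bf p}))\in[O_{\rm min},O_{\rm max}]$, so $J_{\hat O}$ indeed takes values in the interval of interest.

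Next I would count the policies. By definition a digitized control policy of length $P$ is an integer tuple ${\bf p}\in AP\subseteq\mathbb{Z}^P$, and for each fixed $P$ the set $\mathbb{Z}^P$ is countable. The collection of all finite-length policies is therefore the countable union
\begin{equation}
\bigcup_{P\ge 1}(\text{policies of length }P),
\end{equation}
which is again countable. Consequently the image
\begin{equation}
\mathcal{S}=\{J_{\hat O}({\bf p}):{\bf p}\text{ is a finite-length policy}\}\subseteq[O_{\rm min},O_{\rm max}]
\end{equation}
is at most countable.

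Finally, since $[O_{\rm min},O_{\rm max}]$ has the cardinality of the continuum while $\mathcal{S}$ is countable, the complement $[O_{\rm min},O_{\rm max}]\setminus\mathcal{S}$ is nonempty (in fact has full Lebesgue measure). Any $J_0$ in this complement witnesses the theorem: by construction no finite-length control policy realises $J_{\hat O}({\bf p})=J_0$, even though $J_0$ lies in the physically meaningful range of $\hat O$. The only substantive point to be careful about is to ensure the problem is non-degenerate, namely to exclude the trivial case $O_{\rm min}=O_{\rm max}$ in which $\hat O$ is proportional to the identity and the whole statement is vacuous; this is handled by the initial choice of $\hat O$, and no further obstacle arises.
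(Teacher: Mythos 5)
Your proof is correct, and it is a genuinely different instantiation of the cardinality idea than the one the paper uses. Both arguments pivot on ``finite-length policies form a countable set,'' but you apply it to the \emph{image of the objective}: for a fixed problem the attainable values $\mathcal{S}=\{J_{\hat O}({\bf p})\}$ are countable, so some $J_0$ in the uncountable interval $[O_{\rm min},O_{\rm max}]$ is missed. The paper instead fixes $J_0$ and varies the problem data: it restricts to unitary Kraus maps with a pure initial state, constructs from the eigendecomposition of $\hat O$ a state $|\psi\rangle=\sqrt{\lambda}|\psi_{\min}\rangle+\sqrt{1-\lambda}|\psi_{\max}\rangle$ with $\lambda=(O_{\max}-J_0)/(O_{\max}-O_{\min})$ for which attaining $J_0$ forces a prescribed unitary $\hat U$ to be realized as a finite product $\prod_k \hat V_{p_k}$, and then contradicts the uncountability of the unitary group. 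Your route is more elementary (no eigendecomposition gymnastics, no restriction to unitary channels and pure states) and in fact proves a stronger statement: \emph{every} digitized control problem with a non-scalar $\hat O$ admits an unattainable target value $J_0$, whereas the theorem only claims existence of such problems. What the paper's version buys in exchange is a conclusion tied to a \emph{fixed, physically natural} $J_0$ (one can take $J_0$ to be, say, the extremal eigenvalue for a suitably rotated problem) by pushing the uncountability into the space of problem formulations $(\hat O,|\psi\rangle)$ rather than into the choice of $J_0$; your $J_0$ is produced nonconstructively as an element of the complement of a countable set. Both are valid proofs of the stated theorem; yours has no gap.
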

\begin{proof}
Consider a specific class of digitized quantum control problems with Kraus maps $\Phi_i$ being unitary rotations, $\Phi_i[\hat\rho] =  \hat{V}_i \hat\rho\hat{V}_i^{\dagger}$, $ \hat{V}_i \hat{V}_i^{\dagger} = \hat{\mathbb I}$, and the initial density matrix $\hat{\rho}_0 = | \psi \rangle\langle \psi |$ corresponding to a pure state. In this case, the negation of the theorem statement implies that for every $\hat{O}$ and $| \psi \rangle$ there exists a policy ${\bf p}$ of a finite length $P$ such that $J_{\hat O}({\bf p})=J_0$, where
\begin{align}
	J_{\hat O} ({\bf p})= \langle \psi | \left( \prod_{k=1}^P\hat{V}_{p_k} \right)^{\dagger} \hat{O}  \left( \prod_{k=1}^P\hat{V}_{p_k} \right) | \psi \rangle. 
\end{align}
Comparing this with the eigendecomposition of the observable $\hat{O} = \hat{U}^{\dagger} \diag(O_{\min}, \ldots, O_{\max}) \hat{U}$, $\hat{U}\hat{U}^{\dagger} =\hat{\mathbb I}$, we conclude that the condition $J_{\hat O}=J_0$ can be met if we select $	\hat{U} =\prod_{k=1}^P\hat{V}_{p_k}$ and 
$
	| \psi \rangle = \sqrt{\lambda} | \psi_{\min} \rangle + \sqrt{1-\lambda} | \psi_{\max} \rangle,
$
where $| \psi_{\max} \rangle$ and $| \psi_{\min} \rangle$ are the normalized eigenvectors corresponding to the largest and smallest eigenvalues of $\hat{O}$ and $\lambda = (O_{\max} - J_0) / (O_{\max} - O_{\min})$.

The latter establishes a correspondence between an arbitrary unitary matrix and a finite integer sequence ${\bf p}$. Thus we reached the contradiction that the set of all unitary matrices is countable.
\end{proof}

\begin{theorem}\label{Thm2}
For an $n$-dimensional quantum system, there exists a finite set of Kraus map controls such that for any $\hat{\rho}_0$, $\hat{O}$, $J_0$ ($O_{\rm min}\le J_0\le O_{\rm max}$), and an arbitrary $\epsilon>0$, there is a control policy of a finite length satisfying $|J_{\hat O}-J_0|<\epsilon$.
\end{theorem}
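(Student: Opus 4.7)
The plan is to build a control set that combines one dissipative map with a finite universal set of unitaries, thereby reducing the problem first to state preparation from a fixed reference state and then to gate synthesis.

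First, I would include in the set of accessible Kraus maps an amplitude damping channel of the type already constructed in the subsection on simulating Diophantine equations, namely the map with $\hat K_1=|e_1\rangle\langle e_1|$ and $\hat K_i=|e_{i-1}\rangle\langle e_i|$ for $2\le i\le n$. Iterating this map $n-1$ times on any density matrix drives it to the pure state $|e_1\rangle\langle e_1|$, since populations are cascaded down to the ground level. This step eliminates the dependence on the initial density matrix $\hat{\rho}_0$: after a fixed finite prefix of the control policy, the system is in a canonical pure state regardless of $\hat\rho_0$.

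Second, I would adjoin to the control set a finite collection $\{\hat V_1,\dots,\hat V_M\}$ of unitary Kraus maps $\Phi_i[\hat\rho]=\hat V_i\hat\rho\hat V_i^\dagger$ that constitutes a universal gate set on the $n$-dimensional Hilbert space, i.e.\ a set whose finite products are dense in $SU(n)$ (the existence of such sets is guaranteed by the universality results of \cite{lloyd1995almost,deutsch1995universality}, and the Solovay--Kitaev theorem gives an efficient way to approximate any unitary). Following the construction already used at the end of Theorem~\ref{Thm1}, pick the pure state
\begin{align}
|\psi_\star\rangle=\sqrt{\lambda}\,|\psi_{\min}\rangle+\sqrt{1-\lambda}\,|\psi_{\max}\rangle,\qquad \lambda=\frac{O_{\max}-J_0}{O_{\max}-O_{\min}},
\end{align}
which satisfies $\langle\psi_\star|\hat O|\psi_\star\rangle=J_0$ exactly, and let $\hat U_\star$ be any unitary with $\hat U_\star|e_1\rangle=|\psi_\star\rangle$.

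Third, I would invoke universality to obtain, for any $\delta>0$, a finite sequence of indices $p_1,\dots,p_P$ with $\|\hat V_{p_P}\cdots \hat V_{p_1}-\hat U_\star\|<\delta$ in operator norm. Concatenating the cooling prefix with this unitary suffix yields a finite policy $\mathbf p$ whose resulting state $\hat\rho(\mathbf p)$ satisfies $\|\hat\rho(\mathbf p)-|\psi_\star\rangle\langle\psi_\star|\|\le 2\delta$. Because the functional $\hat\rho\mapsto\Tr(\hat O\hat\rho)$ is Lipschitz with constant $\|\hat O\|$, this gives $|J_{\hat O}(\mathbf p)-J_0|\le 2\|\hat O\|\,\delta$, so choosing $\delta<\epsilon/(2\|\hat O\|)$ completes the argument.

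The only nontrivial ingredient is the existence of a finite universal gate set dense in $SU(n)$; this is classical, so the main conceptual work is combining dissipative relaxation with coherent universality so that the same \emph{fixed} finite control alphabet handles arbitrary $\hat\rho_0$, $\hat O$, and $J_0$ simultaneously. The rest is continuity of the expectation value in the prepared state.
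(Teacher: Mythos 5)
Your proposal is correct and follows essentially the same strategy as the paper's proof: a dissipative reset of an arbitrary $\hat\rho_0$ to a fixed pure state, followed by an approximate unitary rotation (finite universal gate set, Solovay--Kitaev) onto the state $|\psi_\star\rangle$ constructed in the proof of Theorem~\ref{Thm1}, finished by continuity of $\hat\rho\mapsto\Tr(\hat O\hat\rho)$. The only cosmetic difference is the reset step: the paper uses a single ``universally optimal'' Kraus map $\Phi_{\tilde\psi}[\hat\rho]=|\tilde\psi\rangle\langle\tilde\psi|$ with Kraus operators $|\tilde\psi\rangle\langle\chi_i|$, whereas you iterate an amplitude damping cascade $n-1$ times, which works equally well.
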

\begin{proof}
Let $|\tilde\psi\rangle$ be any pure state of the quantum system. Then there exists a universally optimal Kraus map~\cite{wu2007controllability} $\Phi_{\tilde\psi}$ such that $\Phi_{\tilde\psi}(\hat\rho)=|\tilde\psi\rangle\langle\tilde\psi|$ for any density matrix $\hat{\rho}$. Kraus operators for this universally optimal Kraus map have the form $\hat{K}_i=|\tilde\psi\rangle\langle\chi_i|$, where $\{ |\chi_i\rangle \}_{i=1}^n$ is an orthonormal basis in the system Hilbert space.

By the Solovay-Kitaev theorem~\cite{kitaev1997quantum, kitaev2002classical, nielsen2002quantum}, for an $n$-dimensional quantum system there exists a finite set $\cal U$ of unitary operators such that for any unitary operator $\hat{U} \in SU(n)$ there exists a finite sequence $\hat{S}=\hat{S}_k\cdots \hat{S}_1$ with elements $\hat{S}_i\in\cal U$ that satisfies $d(\hat{U}, \hat{S})\equiv\| \hat{U} - \hat{S} \|\equiv\sup\limits_{\|\psi\|=1}\|(\hat{U}-\hat{S})\psi\|<\epsilon$.  Let the corresponding set of Kraus maps be ${\cal K}=\{\Phi|\Phi(\hat\rho)=\hat{U} \hat\rho \hat{U}^\dagger, \textrm{ where } \hat{U} \in{\cal U}\}$.

Consider the (finite) set $\tilde{\cal K}=\{\Phi_{\tilde\psi}\}\cup {\cal K}$ of Kraus maps, where one map is non-unitary and all other are unitary. Then the constructed set satisfies the statement of the theorem. Indeed, let $|\psi\rangle$ be the vector constructed in the proof of Theorem~\ref{Thm1}. Let $\hat{U}$ be a unitary operator such that $\hat{U}|\tilde\psi\rangle=|\psi\rangle$, and $\hat{S}=\hat{S}_k\cdots \hat{S}_1$ be its $\epsilon/\|\hat O\|$-approximation by elements of $\cal U$. Then for the finite composition $\Phi=\Phi_k\dots\Phi_1\Phi_{\tilde\psi}$ we have $|J_{\hat O}-J_0|<\epsilon$.

We remark that, by construction, there are infinitely many sets $\tilde{\cal K}$ satisfying the theorem. Indeed, the map $\Phi_{\tilde\psi}$ can be chosen for any vector $|\tilde\psi\rangle$ and moreover, there exist infinitely many sets of unitary operators $\cal U$. 
\end{proof}

\section*{\large Acknowledgments}
D.I.B. is grateful to Prof. Wolfgang Schleich for igniting an interest in the overlap between quantum mechanics and number theory, which has directly lead to the current work. The results for uncomputability and complexity of controlling open quantum systems are obtained with the support of the Russian Science Foundation project 17-11-01388 at Steklov Mathematical Institute. The rest is supported by the Humboldt Research Fellowship for Experienced Researchers, the Army Research Office (ARO) (grant W911NF-19-1-0377), and Defense Advanced Research Projects Agency (DARPA) (grant D19AP00043) for D.I.B. and by project 1.669.2016/1.4 of the Ministry of Science and Higher Education of the Russian Federation for A.P. The views and conclusions contained in this document are those of the authors and should not be interpreted as representing the official policies, either expressed or implied, of ARO, DARPA, or the U.S. Government. The U.S. Government is authorized to reproduce and distribute reprints for Government  purposes notwithstanding any copyright  notation herein. 

\bibliography{DiophantineQuantum}

\begin{thebibliography}{73}
\expandafter\ifx\csname natexlab\endcsname\relax\def\natexlab#1{#1}\fi
\expandafter\ifx\csname bibnamefont\endcsname\relax
  \def\bibnamefont#1{#1}\fi
\expandafter\ifx\csname bibfnamefont\endcsname\relax
  \def\bibfnamefont#1{#1}\fi
\expandafter\ifx\csname citenamefont\endcsname\relax
  \def\citenamefont#1{#1}\fi
\expandafter\ifx\csname url\endcsname\relax
  \def\url#1{\texttt{#1}}\fi
\expandafter\ifx\csname urlprefix\endcsname\relax\def\urlprefix{URL }\fi
\providecommand{\bibinfo}[2]{#2}
\providecommand{\eprint}[2][]{\url{#2}}

\bibitem[{\citenamefont{Glaser et~al.}(2015)\citenamefont{Glaser, Boscain,
  Calarco, Koch, K{\"o}ckenberger, Kosloff, Kuprov, Luy, Schirmer,
  Schulte-Herbr{\"u}ggen et~al.}}]{glaser2015training}
\bibinfo{author}{\bibfnamefont{S.~J.} \bibnamefont{Glaser}},
  \bibinfo{author}{\bibfnamefont{U.}~\bibnamefont{Boscain}},
  \bibinfo{author}{\bibfnamefont{T.}~\bibnamefont{Calarco}},
  \bibinfo{author}{\bibfnamefont{C.~P.} \bibnamefont{Koch}},
  \bibinfo{author}{\bibfnamefont{W.}~\bibnamefont{K{\"o}ckenberger}},
  \bibinfo{author}{\bibfnamefont{R.}~\bibnamefont{Kosloff}},
  \bibinfo{author}{\bibfnamefont{I.}~\bibnamefont{Kuprov}},
  \bibinfo{author}{\bibfnamefont{B.}~\bibnamefont{Luy}},
  \bibinfo{author}{\bibfnamefont{S.}~\bibnamefont{Schirmer}},
  \bibinfo{author}{\bibfnamefont{T.}~\bibnamefont{Schulte-Herbr{\"u}ggen}},
  \bibnamefont{et~al.}, \bibinfo{journal}{Eur. Phys. J. D}
  \textbf{\bibinfo{volume}{69}}, \bibinfo{pages}{279} (\bibinfo{year}{2015}),
  ISSN \bibinfo{issn}{1434-6079}.

\bibitem[{\citenamefont{Butkovskiy and
  Samoilenko}(1990)}]{butkovskiy1990control}
\bibinfo{author}{\bibfnamefont{A.~G.} \bibnamefont{Butkovskiy}}
  \bibnamefont{and} \bibinfo{author}{\bibfnamefont{Y.~I.}
  \bibnamefont{Samoilenko}}, \emph{\bibinfo{title}{{Control of
  Quantum-Mechanical Processes and Systems}}} (\bibinfo{publisher}{Kluwer
  Academic, Dordrecht}, \bibinfo{year}{1990}).

\bibitem[{\citenamefont{Shapiro and Brumer}(2003)}]{shapiro2003principles}
\bibinfo{author}{\bibfnamefont{M.}~\bibnamefont{Shapiro}} \bibnamefont{and}
  \bibinfo{author}{\bibfnamefont{P.}~\bibnamefont{Brumer}},
  \emph{\bibinfo{title}{Principles of the quantum control of molecular
  processes}} (\bibinfo{publisher}{Wiley-VCH}, \bibinfo{year}{2003}).

\bibitem[{\citenamefont{Tannor}(2007)}]{tannor2007introduction}
\bibinfo{author}{\bibfnamefont{D.~J.} \bibnamefont{Tannor}},
  \emph{\bibinfo{title}{Introduction to quantum mechanics: a time-dependent
  perspective}} (\bibinfo{publisher}{University Science Books},
  \bibinfo{year}{2007}).

\bibitem[{\citenamefont{Fradkov}(2007)}]{fradkov2007cybernetical}
\bibinfo{author}{\bibfnamefont{A.}~\bibnamefont{Fradkov}},
  \emph{\bibinfo{title}{{Cybernetical Physics: From Control of Chaos to Quantum
  Control}}} (\bibinfo{publisher}{Springer, New York}, \bibinfo{year}{2007}).

\bibitem[{\citenamefont{d'Alessandro}(2007)}]{d2007introduction}
\bibinfo{author}{\bibfnamefont{D.}~\bibnamefont{d'Alessandro}},
  \emph{\bibinfo{title}{Introduction to quantum control and dynamics}}
  (\bibinfo{publisher}{Chapman and Hall/CRC}, \bibinfo{year}{2007}).

\bibitem[{\citenamefont{Brif et~al.}(2010)\citenamefont{Brif, Chakrabarti, and
  Rabitz}}]{brif2010control}
\bibinfo{author}{\bibfnamefont{C.}~\bibnamefont{Brif}},
  \bibinfo{author}{\bibfnamefont{R.}~\bibnamefont{Chakrabarti}},
  \bibnamefont{and} \bibinfo{author}{\bibfnamefont{H.}~\bibnamefont{Rabitz}},
  \bibinfo{journal}{New J. Phys.} \textbf{\bibinfo{volume}{12}},
  \bibinfo{pages}{075008} (\bibinfo{year}{2010}).

\bibitem[{\citenamefont{Wiseman and Milburn}(2009)}]{wiseman2009quantum}
\bibinfo{author}{\bibfnamefont{H.~M.} \bibnamefont{Wiseman}} \bibnamefont{and}
  \bibinfo{author}{\bibfnamefont{G.~J.} \bibnamefont{Milburn}},
  \emph{\bibinfo{title}{Quantum measurement and control}}
  (\bibinfo{publisher}{Cambridge university press}, \bibinfo{year}{2009}).

\bibitem[{\citenamefont{Dong and Petersen}(2010)}]{dong2010quantum}
\bibinfo{author}{\bibfnamefont{D.}~\bibnamefont{Dong}} \bibnamefont{and}
  \bibinfo{author}{\bibfnamefont{I.~R.} \bibnamefont{Petersen}},
  \bibinfo{journal}{IET Control Theory \& Applications}
  \textbf{\bibinfo{volume}{4}}, \bibinfo{pages}{2651} (\bibinfo{year}{2010}).

\bibitem[{\citenamefont{{Bouten} et~al.}(2009)\citenamefont{{Bouten}, van
  Handel, and James}}]{bouten2009discrete}
\bibinfo{author}{\bibfnamefont{L.}~\bibnamefont{{Bouten}}},
  \bibinfo{author}{\bibfnamefont{R.}~\bibnamefont{van Handel}},
  \bibnamefont{and} \bibinfo{author}{\bibfnamefont{M.~R.} \bibnamefont{James}},
  \bibinfo{journal}{SIAM Review} \textbf{\bibinfo{volume}{51}},
  \bibinfo{pages}{239} (\bibinfo{year}{2009}).

\bibitem[{\citenamefont{{Dong} and Petersen}(2011)}]{dong2011controllability}
\bibinfo{author}{\bibfnamefont{D.}~\bibnamefont{{Dong}}} \bibnamefont{and}
  \bibinfo{author}{\bibfnamefont{I.~R.} \bibnamefont{Petersen}},
  \bibinfo{journal}{International Journal of Control}
  \textbf{\bibinfo{volume}{84}}, \bibinfo{pages}{37} (\bibinfo{year}{2011}).

\bibitem[{\citenamefont{{Ticozzi} et~al.}(2018)\citenamefont{{Ticozzi},
  Zuccato, Johnson, and Viola}}]{ticozzi2018alternating}
\bibinfo{author}{\bibfnamefont{F.}~\bibnamefont{{Ticozzi}}},
  \bibinfo{author}{\bibfnamefont{L.}~\bibnamefont{Zuccato}},
  \bibinfo{author}{\bibfnamefont{P.~D.} \bibnamefont{Johnson}},
  \bibnamefont{and} \bibinfo{author}{\bibfnamefont{L.}~\bibnamefont{Viola}},
  \bibinfo{journal}{IEEE Transactions on Automatic Control}
  \textbf{\bibinfo{volume}{63}}, \bibinfo{pages}{819} (\bibinfo{year}{2018}).

\bibitem[{\citenamefont{{Belavkin, V.P.}}(1999)}]{belavkin1999measurement}
\bibinfo{author}{\bibnamefont{{Belavkin, V.P.}}}, \bibinfo{journal}{Reports on
  Mathematical Physics} \textbf{\bibinfo{volume}{43}}, \bibinfo{pages}{A405}
  (\bibinfo{year}{1999}).

\bibitem[{\citenamefont{{Pechen} et~al.}(2006)\citenamefont{{Pechen}, Il'in,
  Shuang, and Rabitz}}]{pechen2006quantum}
\bibinfo{author}{\bibfnamefont{A.}~\bibnamefont{{Pechen}}},
  \bibinfo{author}{\bibfnamefont{N.}~\bibnamefont{Il'in}},
  \bibinfo{author}{\bibfnamefont{F.}~\bibnamefont{Shuang}}, \bibnamefont{and}
  \bibinfo{author}{\bibfnamefont{H.}~\bibnamefont{Rabitz}},
  \bibinfo{journal}{Phys. Rev. A} \textbf{\bibinfo{volume}{74}},
  \bibinfo{pages}{052102} (\bibinfo{year}{2006}).

\bibitem[{\citenamefont{{Shuang} et~al.}(2007)\citenamefont{{Shuang}, Pechen,
  Ho, and H.}}]{shuang2007observation}
\bibinfo{author}{\bibfnamefont{F.}~\bibnamefont{{Shuang}}},
  \bibinfo{author}{\bibfnamefont{A.}~\bibnamefont{Pechen}},
  \bibinfo{author}{\bibfnamefont{T.-S.} \bibnamefont{Ho}}, \bibnamefont{and}
  \bibinfo{author}{\bibfnamefont{R.}~\bibnamefont{H.}}, \bibinfo{journal}{J.
  Chem. Phys.} \textbf{\bibinfo{volume}{126}}, \bibinfo{pages}{134303}
  (\bibinfo{year}{2007}).

\bibitem[{\citenamefont{Matiyasevich}(1993)}]{matiyasevich1993hilbert}
\bibinfo{author}{\bibfnamefont{Y.~V.} \bibnamefont{Matiyasevich}},
  \emph{\bibinfo{title}{Hilbert's tenth problem}} (\bibinfo{publisher}{MIT
  press}, \bibinfo{address}{Cambridge}, \bibinfo{year}{1993}).

\bibitem[{\citenamefont{Matiyasevich}(2011)}]{matiyasevich2011can}
\bibinfo{author}{\bibfnamefont{Y.~V.} \bibnamefont{Matiyasevich}},
  \bibinfo{journal}{Proceedings of the Steklov Institute of Mathematics}
  \textbf{\bibinfo{volume}{275}}, \bibinfo{pages}{118} (\bibinfo{year}{2011}).

\bibitem[{\citenamefont{Bondar et~al.}(2009)\citenamefont{Bondar, Spanner, Liu,
  and Yudin}}]{bondar2009photoelectron}
\bibinfo{author}{\bibfnamefont{D.~I.} \bibnamefont{Bondar}},
  \bibinfo{author}{\bibfnamefont{M.}~\bibnamefont{Spanner}},
  \bibinfo{author}{\bibfnamefont{W.-K.} \bibnamefont{Liu}}, \bibnamefont{and}
  \bibinfo{author}{\bibfnamefont{G.~L.} \bibnamefont{Yudin}},
  \bibinfo{journal}{Phys. Rev. A} \textbf{\bibinfo{volume}{79}},
  \bibinfo{pages}{063404} (\bibinfo{year}{2009}).

\bibitem[{\citenamefont{Pavlyukh and Rau}(2013)}]{pavlyukh20131}
\bibinfo{author}{\bibfnamefont{Y.}~\bibnamefont{Pavlyukh}} \bibnamefont{and}
  \bibinfo{author}{\bibfnamefont{A.}~\bibnamefont{Rau}},
  \bibinfo{journal}{International Journal of Quantum Information}
  \textbf{\bibinfo{volume}{11}}, \bibinfo{pages}{1350056}
  (\bibinfo{year}{2013}).

\bibitem[{\citenamefont{Kieu}(2003)}]{kieu2003quantum}
\bibinfo{author}{\bibfnamefont{T.~D.} \bibnamefont{Kieu}},
  \bibinfo{journal}{Int. J. Theor. Phys.} \textbf{\bibinfo{volume}{42}},
  \bibinfo{pages}{1461} (\bibinfo{year}{2003}).

\bibitem[{\citenamefont{{Turing}}(1936)}]{Turing1936}
\bibinfo{author}{\bibfnamefont{A.}~\bibnamefont{{Turing}}},
  \bibinfo{journal}{Proceedings of the London Mathematical Society}
  \textbf{\bibinfo{volume}{42}}, \bibinfo{pages}{230} (\bibinfo{year}{1936}).

\bibitem[{\citenamefont{Matiyasevich}(2018)}]{matiyasevichriemann}
\bibinfo{author}{\bibfnamefont{Y.~V.} \bibnamefont{Matiyasevich}},
  \bibinfo{journal}{Chebyshevskii Sbornik.} \textbf{\bibinfo{volume}{19}},
  \bibinfo{pages}{46} (\bibinfo{year}{2018}), \bibinfo{note}{for English
  translation see PDMI Preprint at
  \url{ftp://ftp.pdmi.ras.ru/pub/publicat/preprint/2018/09-18.pdf.gz}}.

\bibitem[{\citenamefont{Bost and Connes}(1995)}]{bost1995hecke}
\bibinfo{author}{\bibfnamefont{J.-B.} \bibnamefont{Bost}} \bibnamefont{and}
  \bibinfo{author}{\bibfnamefont{A.}~\bibnamefont{Connes}},
  \bibinfo{journal}{Selecta Mathematica} \textbf{\bibinfo{volume}{1}},
  \bibinfo{pages}{411} (\bibinfo{year}{1995}).

\bibitem[{\citenamefont{Planat et~al.}(2011)\citenamefont{Planat, Sol{\'e}, and
  Omar}}]{planat2011riemann}
\bibinfo{author}{\bibfnamefont{M.}~\bibnamefont{Planat}},
  \bibinfo{author}{\bibfnamefont{P.}~\bibnamefont{Sol{\'e}}}, \bibnamefont{and}
  \bibinfo{author}{\bibfnamefont{S.}~\bibnamefont{Omar}}, \bibinfo{journal}{J.
  Phys. A} \textbf{\bibinfo{volume}{44}}, \bibinfo{pages}{145203}
  (\bibinfo{year}{2011}).

\bibitem[{\citenamefont{Mack et~al.}(2010)\citenamefont{Mack, Dahl, Moya-Cessa,
  Strunz, Walser, and Schleich}}]{mack2010riemann}
\bibinfo{author}{\bibfnamefont{R.}~\bibnamefont{Mack}},
  \bibinfo{author}{\bibfnamefont{J.~P.} \bibnamefont{Dahl}},
  \bibinfo{author}{\bibfnamefont{H.}~\bibnamefont{Moya-Cessa}},
  \bibinfo{author}{\bibfnamefont{W.}~\bibnamefont{Strunz}},
  \bibinfo{author}{\bibfnamefont{R.}~\bibnamefont{Walser}}, \bibnamefont{and}
  \bibinfo{author}{\bibfnamefont{W.}~\bibnamefont{Schleich}},
  \bibinfo{journal}{Phys. Rev. A} \textbf{\bibinfo{volume}{82}},
  \bibinfo{pages}{032119} (\bibinfo{year}{2010}).

\bibitem[{\citenamefont{Feiler and Schleich}(2013)}]{feiler2013entanglement}
\bibinfo{author}{\bibfnamefont{C.}~\bibnamefont{Feiler}} \bibnamefont{and}
  \bibinfo{author}{\bibfnamefont{W.}~\bibnamefont{Schleich}},
  \bibinfo{journal}{New J. Phys.} \textbf{\bibinfo{volume}{15}},
  \bibinfo{pages}{063009} (\bibinfo{year}{2013}).

\bibitem[{\citenamefont{Torosov et~al.}(2013)\citenamefont{Torosov,
  Della~Valle, and Longhi}}]{torosov2013quantum}
\bibinfo{author}{\bibfnamefont{B.~T.} \bibnamefont{Torosov}},
  \bibinfo{author}{\bibfnamefont{G.}~\bibnamefont{Della~Valle}},
  \bibnamefont{and} \bibinfo{author}{\bibfnamefont{S.}~\bibnamefont{Longhi}},
  \bibinfo{journal}{Phys. Rev. A} \textbf{\bibinfo{volume}{87}},
  \bibinfo{pages}{032103} (\bibinfo{year}{2013}).

\bibitem[{\citenamefont{Feiler and Schleich}(2015)}]{feiler2015dirichlet}
\bibinfo{author}{\bibfnamefont{C.}~\bibnamefont{Feiler}} \bibnamefont{and}
  \bibinfo{author}{\bibfnamefont{W.}~\bibnamefont{Schleich}},
  \bibinfo{journal}{New J. Phys.} \textbf{\bibinfo{volume}{17}},
  \bibinfo{pages}{063040} (\bibinfo{year}{2015}).

\bibitem[{\citenamefont{Berry}(1986)}]{berry1986riemann}
\bibinfo{author}{\bibfnamefont{M.~V.} \bibnamefont{Berry}}, in
  \emph{\bibinfo{booktitle}{Quantum chaos and statistical nuclear physics}}
  (\bibinfo{publisher}{Springer}, \bibinfo{year}{1986}), pp.
  \bibinfo{pages}{1--17}.

\bibitem[{\citenamefont{Berry and Keating}(1999)}]{berry1999riemann}
\bibinfo{author}{\bibfnamefont{M.~V.} \bibnamefont{Berry}} \bibnamefont{and}
  \bibinfo{author}{\bibfnamefont{J.~P.} \bibnamefont{Keating}},
  \bibinfo{journal}{SIAM review} \textbf{\bibinfo{volume}{41}},
  \bibinfo{pages}{236} (\bibinfo{year}{1999}).

\bibitem[{\citenamefont{Lapidus et~al.}(2008)}]{lapidus2008search}
\bibinfo{author}{\bibfnamefont{M.~L.} \bibnamefont{Lapidus}}
  \bibnamefont{et~al.}, \emph{\bibinfo{title}{In Search of the Riemann Zeros:
  Strings, fractal membranes and noncommutative spacetimes}}
  (\bibinfo{publisher}{American Mathematical Soc.}, \bibinfo{year}{2008}).

\bibitem[{\citenamefont{Bender et~al.}(2017)\citenamefont{Bender, Brody, and
  M{\"u}ller}}]{bender2017hamiltonian}
\bibinfo{author}{\bibfnamefont{C.~M.} \bibnamefont{Bender}},
  \bibinfo{author}{\bibfnamefont{D.~C.} \bibnamefont{Brody}}, \bibnamefont{and}
  \bibinfo{author}{\bibfnamefont{M.~P.} \bibnamefont{M{\"u}ller}},
  \bibinfo{journal}{Phys. Rev. Lett.} \textbf{\bibinfo{volume}{118}},
  \bibinfo{pages}{130201} (\bibinfo{year}{2017}).

\bibitem[{\citenamefont{da~Costa and Doria}(1991)}]{da1991undecidability}
\bibinfo{author}{\bibfnamefont{N.~C.} \bibnamefont{da~Costa}} \bibnamefont{and}
  \bibinfo{author}{\bibfnamefont{F.~A.} \bibnamefont{Doria}},
  \bibinfo{journal}{Int. J. Theor. Phys.} \textbf{\bibinfo{volume}{30}},
  \bibinfo{pages}{1041} (\bibinfo{year}{1991}).

\bibitem[{\citenamefont{Li and Ying}(2014)}]{li2014decidable}
\bibinfo{author}{\bibfnamefont{Y.}~\bibnamefont{Li}} \bibnamefont{and}
  \bibinfo{author}{\bibfnamefont{M.}~\bibnamefont{Ying}}, in
  \emph{\bibinfo{booktitle}{International Conference on Concurrency Theory}}
  (\bibinfo{organization}{Springer}, \bibinfo{year}{2014}), pp.
  \bibinfo{pages}{482--496}.

\bibitem[{\citenamefont{Lloyd}(1993)}]{lloyd1993quantum}
\bibinfo{author}{\bibfnamefont{S.}~\bibnamefont{Lloyd}},
  \bibinfo{journal}{Phys. Rev. Lett.} \textbf{\bibinfo{volume}{71}},
  \bibinfo{pages}{943} (\bibinfo{year}{1993}).

\bibitem[{\citenamefont{Cubitt et~al.}(2015)\citenamefont{Cubitt, Perez-Garcia,
  and Wolf}}]{cubitt2015undecidability}
\bibinfo{author}{\bibfnamefont{T.~S.} \bibnamefont{Cubitt}},
  \bibinfo{author}{\bibfnamefont{D.}~\bibnamefont{Perez-Garcia}},
  \bibnamefont{and} \bibinfo{author}{\bibfnamefont{M.~M.} \bibnamefont{Wolf}},
  \bibinfo{journal}{Nature} \textbf{\bibinfo{volume}{528}},
  \bibinfo{pages}{207} (\bibinfo{year}{2015}).

\bibitem[{\citenamefont{Bausch et~al.}(2018)\citenamefont{Bausch, Cubitt,
  Lucia, and Perez-Garcia}}]{bausch2018undecidability}
\bibinfo{author}{\bibfnamefont{J.}~\bibnamefont{Bausch}},
  \bibinfo{author}{\bibfnamefont{T.}~\bibnamefont{Cubitt}},
  \bibinfo{author}{\bibfnamefont{A.}~\bibnamefont{Lucia}}, \bibnamefont{and}
  \bibinfo{author}{\bibfnamefont{D.}~\bibnamefont{Perez-Garcia}},
  \bibinfo{journal}{arXiv:1810.01858}  (\bibinfo{year}{2018}).

\bibitem[{\citenamefont{Cubitt}(2011)}]{Cubitt2011}
\bibinfo{author}{\bibfnamefont{T.~S.} \bibnamefont{Cubitt}}, in
  \emph{\bibinfo{booktitle}{Proceedings of IQC Waterloo}}
  (\bibinfo{address}{Waterloo, Canada}, \bibinfo{year}{2011}).

\bibitem[{\citenamefont{Smith}(2006)}]{smith2006three}
\bibinfo{author}{\bibfnamefont{W.~D.} \bibnamefont{Smith}},
  \bibinfo{journal}{Applied Mathematics and Computation}
  \textbf{\bibinfo{volume}{178}}, \bibinfo{pages}{184} (\bibinfo{year}{2006}).

\bibitem[{\citenamefont{Wolf et~al.}(2011)\citenamefont{Wolf, Cubitt, and
  Perez-Garcia}}]{wolf2011problems}
\bibinfo{author}{\bibfnamefont{M.~M.} \bibnamefont{Wolf}},
  \bibinfo{author}{\bibfnamefont{T.~S.} \bibnamefont{Cubitt}},
  \bibnamefont{and}
  \bibinfo{author}{\bibfnamefont{D.}~\bibnamefont{Perez-Garcia}},
  \bibinfo{journal}{arXiv preprint arXiv:1111.5425}  (\bibinfo{year}{2011}).

\bibitem[{\citenamefont{Eisert et~al.}(2012)\citenamefont{Eisert, M{\"u}ller,
  and Gogolin}}]{eisert2012quantum}
\bibinfo{author}{\bibfnamefont{J.}~\bibnamefont{Eisert}},
  \bibinfo{author}{\bibfnamefont{M.~P.} \bibnamefont{M{\"u}ller}},
  \bibnamefont{and} \bibinfo{author}{\bibfnamefont{C.}~\bibnamefont{Gogolin}},
  \bibinfo{journal}{Phys. Rev. Lett.} \textbf{\bibinfo{volume}{108}},
  \bibinfo{pages}{260501} (\bibinfo{year}{2012}).

\bibitem[{\citenamefont{Palittapongarnpim
  et~al.}(2017)\citenamefont{Palittapongarnpim, Wittek, Zahedinejad, Vedaie,
  and Sanders}}]{palittapongarnpim2017learning}
\bibinfo{author}{\bibfnamefont{P.}~\bibnamefont{Palittapongarnpim}},
  \bibinfo{author}{\bibfnamefont{P.}~\bibnamefont{Wittek}},
  \bibinfo{author}{\bibfnamefont{E.}~\bibnamefont{Zahedinejad}},
  \bibinfo{author}{\bibfnamefont{S.}~\bibnamefont{Vedaie}}, \bibnamefont{and}
  \bibinfo{author}{\bibfnamefont{B.~C.} \bibnamefont{Sanders}},
  \bibinfo{journal}{Neurocomputing} \textbf{\bibinfo{volume}{268}},
  \bibinfo{pages}{116} (\bibinfo{year}{2017}).

\bibitem[{\citenamefont{Mari et~al.}(2014)\citenamefont{Mari, Giovannetti, and
  Holevo}}]{mari2014quantum}
\bibinfo{author}{\bibfnamefont{A.}~\bibnamefont{Mari}},
  \bibinfo{author}{\bibfnamefont{V.}~\bibnamefont{Giovannetti}},
  \bibnamefont{and} \bibinfo{author}{\bibfnamefont{A.~S.}
  \bibnamefont{Holevo}}, \bibinfo{journal}{Nature Communications}
  \textbf{\bibinfo{volume}{5}}, \bibinfo{pages}{3826} (\bibinfo{year}{2014}).

\bibitem[{\citenamefont{Soltamov et~al.}(2019)\citenamefont{Soltamov, Kasper,
  Poshakinskiy, Anisimov, Mokhov, Sperlich, Tarasenko, Baranov, Astakhov, and
  Dyakonov}}]{soltamov2019excitation}
\bibinfo{author}{\bibfnamefont{V.~A.} \bibnamefont{Soltamov}},
  \bibinfo{author}{\bibfnamefont{C.}~\bibnamefont{Kasper}},
  \bibinfo{author}{\bibfnamefont{A.~V.} \bibnamefont{Poshakinskiy}},
  \bibinfo{author}{\bibfnamefont{A.~N.} \bibnamefont{Anisimov}},
  \bibinfo{author}{\bibfnamefont{E.~N.} \bibnamefont{Mokhov}},
  \bibinfo{author}{\bibfnamefont{A.}~\bibnamefont{Sperlich}},
  \bibinfo{author}{\bibfnamefont{S.~A.} \bibnamefont{Tarasenko}},
  \bibinfo{author}{\bibfnamefont{P.~G.} \bibnamefont{Baranov}},
  \bibinfo{author}{\bibfnamefont{G.~V.} \bibnamefont{Astakhov}},
  \bibnamefont{and} \bibinfo{author}{\bibfnamefont{V.}~\bibnamefont{Dyakonov}},
  \bibinfo{journal}{Nature Communications} \textbf{\bibinfo{volume}{10}},
  \bibinfo{pages}{1678} (\bibinfo{year}{2019}).

\bibitem[{\citenamefont{Niu et~al.}(2019)\citenamefont{Niu, Boixo, Smelyanskiy,
  and Neven}}]{niu2019universal}
\bibinfo{author}{\bibfnamefont{M.~Y.} \bibnamefont{Niu}},
  \bibinfo{author}{\bibfnamefont{S.}~\bibnamefont{Boixo}},
  \bibinfo{author}{\bibfnamefont{V.~N.} \bibnamefont{Smelyanskiy}},
  \bibnamefont{and} \bibinfo{author}{\bibfnamefont{H.}~\bibnamefont{Neven}},
  \bibinfo{journal}{npj Quantum Information} \textbf{\bibinfo{volume}{5}},
  \bibinfo{pages}{33} (\bibinfo{year}{2019}).

\bibitem[{\citenamefont{Impens and Guéry-Odelin}(2019)}]{impens2019fast}
\bibinfo{author}{\bibfnamefont{F.}~\bibnamefont{Impens}} \bibnamefont{and}
  \bibinfo{author}{\bibfnamefont{D.}~\bibnamefont{Guéry-Odelin}},
  \bibinfo{journal}{Scientific Reports} \textbf{\bibinfo{volume}{9}},
  \bibinfo{pages}{4048} (\bibinfo{year}{2019}).

\bibitem[{\citenamefont{Pechen and Rabitz}(2006)}]{pechen2006teaching}
\bibinfo{author}{\bibfnamefont{A.}~\bibnamefont{Pechen}} \bibnamefont{and}
  \bibinfo{author}{\bibfnamefont{H.}~\bibnamefont{Rabitz}},
  \bibinfo{journal}{Phys. Rev. A} \textbf{\bibinfo{volume}{73}},
  \bibinfo{pages}{062102} (\bibinfo{year}{2006}).

\bibitem[{\citenamefont{Gordon et~al.}(2008)\citenamefont{Gordon, Kurizki, and
  Lidar}}]{gordon2008optimal}
\bibinfo{author}{\bibfnamefont{G.}~\bibnamefont{Gordon}},
  \bibinfo{author}{\bibfnamefont{G.}~\bibnamefont{Kurizki}}, \bibnamefont{and}
  \bibinfo{author}{\bibfnamefont{D.~A.} \bibnamefont{Lidar}},
  \bibinfo{journal}{Phys. Rev. Lett.} \textbf{\bibinfo{volume}{101}},
  \bibinfo{pages}{010403} (\bibinfo{year}{2008}).

\bibitem[{\citenamefont{Verstraete et~al.}(2009)\citenamefont{Verstraete, Wolf,
  and Cirac}}]{verstraete2009quantum}
\bibinfo{author}{\bibfnamefont{F.}~\bibnamefont{Verstraete}},
  \bibinfo{author}{\bibfnamefont{M.~M.} \bibnamefont{Wolf}}, \bibnamefont{and}
  \bibinfo{author}{\bibfnamefont{J.~I.} \bibnamefont{Cirac}},
  \bibinfo{journal}{Nat. Phys.} \textbf{\bibinfo{volume}{5}},
  \bibinfo{pages}{633} (\bibinfo{year}{2009}).

\bibitem[{\citenamefont{Schirmer and Wang}(2010)}]{schirmer2010stabilizing}
\bibinfo{author}{\bibfnamefont{S.}~\bibnamefont{Schirmer}} \bibnamefont{and}
  \bibinfo{author}{\bibfnamefont{X.}~\bibnamefont{Wang}},
  \bibinfo{journal}{Phys. Rev. A} \textbf{\bibinfo{volume}{81}},
  \bibinfo{pages}{062306} (\bibinfo{year}{2010}).

\bibitem[{\citenamefont{Wiseman}(2011)}]{wiseman2011quantum}
\bibinfo{author}{\bibfnamefont{H.~M.} \bibnamefont{Wiseman}},
  \bibinfo{journal}{Nature} \textbf{\bibinfo{volume}{470}},
  \bibinfo{pages}{178} (\bibinfo{year}{2011}).

\bibitem[{\citenamefont{Schmidt et~al.}(2011)\citenamefont{Schmidt, Negretti,
  Ankerhold, Calarco, and Stockburger}}]{schmidt2011optimal}
\bibinfo{author}{\bibfnamefont{R.}~\bibnamefont{Schmidt}},
  \bibinfo{author}{\bibfnamefont{A.}~\bibnamefont{Negretti}},
  \bibinfo{author}{\bibfnamefont{J.}~\bibnamefont{Ankerhold}},
  \bibinfo{author}{\bibfnamefont{T.}~\bibnamefont{Calarco}}, \bibnamefont{and}
  \bibinfo{author}{\bibfnamefont{J.~T.} \bibnamefont{Stockburger}},
  \bibinfo{journal}{Phys. Rev. Lett.} \textbf{\bibinfo{volume}{107}},
  \bibinfo{pages}{130404} (\bibinfo{year}{2011}).

\bibitem[{\citenamefont{Pechen and Trushechkin}(2015)}]{pechen2015measurement}
\bibinfo{author}{\bibfnamefont{A.}~\bibnamefont{Pechen}} \bibnamefont{and}
  \bibinfo{author}{\bibfnamefont{A.}~\bibnamefont{Trushechkin}},
  \bibinfo{journal}{Phys. Rev. A} \textbf{\bibinfo{volume}{91}},
  \bibinfo{pages}{052316} (\bibinfo{year}{2015}).

\bibitem[{\citenamefont{Koch}(2016)}]{koch2016controlling}
\bibinfo{author}{\bibfnamefont{C.~P.} \bibnamefont{Koch}}, \bibinfo{journal}{J.
  Phys.: Cond. Matter} \textbf{\bibinfo{volume}{28}}, \bibinfo{pages}{213001}
  (\bibinfo{year}{2016}).

\bibitem[{\citenamefont{Vuglar et~al.}(2018)\citenamefont{Vuglar, Zhdanov,
  Cabrera, Seideman, Jarzynski, and Bondar}}]{vuglar2018nonconservative}
\bibinfo{author}{\bibfnamefont{S.~L.} \bibnamefont{Vuglar}},
  \bibinfo{author}{\bibfnamefont{D.~V.} \bibnamefont{Zhdanov}},
  \bibinfo{author}{\bibfnamefont{R.}~\bibnamefont{Cabrera}},
  \bibinfo{author}{\bibfnamefont{T.}~\bibnamefont{Seideman}},
  \bibinfo{author}{\bibfnamefont{C.}~\bibnamefont{Jarzynski}},
  \bibnamefont{and} \bibinfo{author}{\bibfnamefont{D.~I.}
  \bibnamefont{Bondar}}, \bibinfo{journal}{Phys. Rev. Lett.}
  \textbf{\bibinfo{volume}{120}}, \bibinfo{pages}{230404}
  (\bibinfo{year}{2018}).

\bibitem[{\citenamefont{Volovich}(2010)}]{Volovich2010}
\bibinfo{author}{\bibfnamefont{I.~V.} \bibnamefont{Volovich}},
  \bibinfo{journal}{P-Adic Numbers, Ultrametric Analysis, and Applications}
  \textbf{\bibinfo{volume}{2}}, \bibinfo{pages}{77} (\bibinfo{year}{2010}),
  ISSN \bibinfo{issn}{2070-0474},
  \urlprefix\url{https://doi.org/10.1134/S2070046610010061}.

\bibitem[{\citenamefont{Dragovich et~al.}(2017)\citenamefont{Dragovich,
  Khrennikov, Kozyrev, Volovich, and Zelenov}}]{Dragovich2017}
\bibinfo{author}{\bibfnamefont{B.}~\bibnamefont{Dragovich}},
  \bibinfo{author}{\bibfnamefont{A.}~\bibnamefont{Khrennikov}},
  \bibinfo{author}{\bibfnamefont{S.~V.} \bibnamefont{Kozyrev}},
  \bibinfo{author}{\bibfnamefont{I.}~\bibnamefont{Volovich}}, \bibnamefont{and}
  \bibinfo{author}{\bibfnamefont{E.~I.} \bibnamefont{Zelenov}},
  \bibinfo{journal}{p-Adic Numbers, Ultrametric Analysis and Applications}
  \textbf{\bibinfo{volume}{9}}, \bibinfo{pages}{87} (\bibinfo{year}{2017}).

\bibitem[{\citenamefont{Kraus}(1983)}]{kraus1983states}
\bibinfo{author}{\bibfnamefont{K.}~\bibnamefont{Kraus}},
  \emph{\bibinfo{title}{{States, Effects and Operations: Fundamental Notions of
  Quantum Theory}}} (\bibinfo{publisher}{Berlin, Springer},
  \bibinfo{year}{1983}).

\bibitem[{\citenamefont{Nielsen and Chuang}(2002)}]{nielsen2002quantum}
\bibinfo{author}{\bibfnamefont{M.~A.} \bibnamefont{Nielsen}} \bibnamefont{and}
  \bibinfo{author}{\bibfnamefont{I.}~\bibnamefont{Chuang}},
  \emph{\bibinfo{title}{Quantum computation and quantum information}}
  (\bibinfo{year}{2002}).

\bibitem[{\citenamefont{{Lloyd}}(1995)}]{lloyd1995almost}
\bibinfo{author}{\bibfnamefont{S.}~\bibnamefont{{Lloyd}}},
  \bibinfo{journal}{Physical Review Letters} \textbf{\bibinfo{volume}{75}},
  \bibinfo{pages}{346} (\bibinfo{year}{1995}).

\bibitem[{\citenamefont{{Deutsch} et~al.}(1995)\citenamefont{{Deutsch},
  Barenco, and Artur}}]{deutsch1995universality}
\bibinfo{author}{\bibfnamefont{D.~E.} \bibnamefont{{Deutsch}}},
  \bibinfo{author}{\bibfnamefont{A.}~\bibnamefont{Barenco}}, \bibnamefont{and}
  \bibinfo{author}{\bibfnamefont{E.}~\bibnamefont{Artur}},
  \bibinfo{journal}{Proceedings of the Royal Society A: Mathematical, Physical
  and Engineering Sciences} \textbf{\bibinfo{volume}{449}},
  \bibinfo{pages}{669} (\bibinfo{year}{1995}).

\bibitem[{\citenamefont{{Jeandel}}(2004)}]{Jeandel2004universality}
\bibinfo{author}{\bibfnamefont{E.}~\bibnamefont{{Jeandel}}}, in
  \emph{\bibinfo{booktitle}{Automata, Languages and Programming. ICALP 2004.}},
  edited by \bibinfo{editor}{\bibfnamefont{J.}~\bibnamefont{D\'~iaz}},
  \bibinfo{editor}{\bibfnamefont{J.}~\bibnamefont{Karhum\"~aki}},
  \bibinfo{editor}{\bibfnamefont{A.}~\bibnamefont{Lepist\"~o}},
  \bibnamefont{and} \bibinfo{editor}{\bibfnamefont{D.}~\bibnamefont{Sannella}}
  (\bibinfo{year}{2004}), vol. \bibinfo{volume}{3142}.

\bibitem[{\citenamefont{Jones}(1982)}]{jones1982universal}
\bibinfo{author}{\bibfnamefont{J.~P.} \bibnamefont{Jones}},
  \bibinfo{journal}{J. Symbolic logic} \textbf{\bibinfo{volume}{47}},
  \bibinfo{pages}{549} (\bibinfo{year}{1982}).

\bibitem[{\citenamefont{Campos et~al.}(2017)\citenamefont{Campos, Bondar,
  Cabrera, and Rabitz}}]{campos2017make}
\bibinfo{author}{\bibfnamefont{A.~G.} \bibnamefont{Campos}},
  \bibinfo{author}{\bibfnamefont{D.~I.} \bibnamefont{Bondar}},
  \bibinfo{author}{\bibfnamefont{R.}~\bibnamefont{Cabrera}}, \bibnamefont{and}
  \bibinfo{author}{\bibfnamefont{H.~A.} \bibnamefont{Rabitz}},
  \bibinfo{journal}{Phys. Rev. Lett.} \textbf{\bibinfo{volume}{118}},
  \bibinfo{pages}{083201} (\bibinfo{year}{2017}).

\bibitem[{\citenamefont{Magann et~al.}(2018)\citenamefont{Magann, Ho, and
  Rabitz}}]{magann2018singularity}
\bibinfo{author}{\bibfnamefont{A.}~\bibnamefont{Magann}},
  \bibinfo{author}{\bibfnamefont{T.-S.} \bibnamefont{Ho}}, \bibnamefont{and}
  \bibinfo{author}{\bibfnamefont{H.}~\bibnamefont{Rabitz}},
  \bibinfo{journal}{Phys. Rev. A} \textbf{\bibinfo{volume}{98}},
  \bibinfo{pages}{043429} (\bibinfo{year}{2018}).

\bibitem[{\citenamefont{Adleman et~al.}(1977)\citenamefont{Adleman, Manders,
  and Miller}}]{adleman1977taking}
\bibinfo{author}{\bibfnamefont{L.}~\bibnamefont{Adleman}},
  \bibinfo{author}{\bibfnamefont{K.}~\bibnamefont{Manders}}, \bibnamefont{and}
  \bibinfo{author}{\bibfnamefont{G.}~\bibnamefont{Miller}}, in
  \emph{\bibinfo{booktitle}{Foundations of Computer Science, 1977., 18th Annual
  Symposium on}} (\bibinfo{organization}{IEEE}, \bibinfo{year}{1977}), pp.
  \bibinfo{pages}{175--178}.

\bibitem[{\citenamefont{Sanders and Milburn}(1992)}]{sanders1992quantum}
\bibinfo{author}{\bibfnamefont{B.~C.} \bibnamefont{Sanders}} \bibnamefont{and}
  \bibinfo{author}{\bibfnamefont{G.~J.} \bibnamefont{Milburn}},
  \bibinfo{journal}{Phys. Rev. A} \textbf{\bibinfo{volume}{45}},
  \bibinfo{pages}{1919} (\bibinfo{year}{1992}).

\bibitem[{\citenamefont{Puri et~al.}(2017)\citenamefont{Puri, Boutin, and
  Blais}}]{puri2017engineering}
\bibinfo{author}{\bibfnamefont{S.}~\bibnamefont{Puri}},
  \bibinfo{author}{\bibfnamefont{S.}~\bibnamefont{Boutin}}, \bibnamefont{and}
  \bibinfo{author}{\bibfnamefont{A.}~\bibnamefont{Blais}},
  \bibinfo{journal}{npj Quantum Information} \textbf{\bibinfo{volume}{3}},
  \bibinfo{pages}{18} (\bibinfo{year}{2017}).

\bibitem[{\citenamefont{Rabitz et~al.}(2004)\citenamefont{Rabitz, Hsieh, and
  Rosenthal}}]{rabitz2004quantum}
\bibinfo{author}{\bibfnamefont{H.}~\bibnamefont{Rabitz}},
  \bibinfo{author}{\bibfnamefont{M.}~\bibnamefont{Hsieh}}, \bibnamefont{and}
  \bibinfo{author}{\bibfnamefont{C.}~\bibnamefont{Rosenthal}},
  \bibinfo{journal}{Science} \textbf{\bibinfo{volume}{303}},
  \bibinfo{pages}{1998} (\bibinfo{year}{2004}).

\bibitem[{\citenamefont{Day et~al.}(2019)\citenamefont{Day, Bukov, Weinberg,
  Mehta, and Sels}}]{day2019glassy}
\bibinfo{author}{\bibfnamefont{A.~G.} \bibnamefont{Day}},
  \bibinfo{author}{\bibfnamefont{M.}~\bibnamefont{Bukov}},
  \bibinfo{author}{\bibfnamefont{P.}~\bibnamefont{Weinberg}},
  \bibinfo{author}{\bibfnamefont{P.}~\bibnamefont{Mehta}}, \bibnamefont{and}
  \bibinfo{author}{\bibfnamefont{D.}~\bibnamefont{Sels}},
  \bibinfo{journal}{Phys. Rev. Lett.} \textbf{\bibinfo{volume}{122}},
  \bibinfo{pages}{020601} (\bibinfo{year}{2019}).

\bibitem[{\citenamefont{Wu et~al.}(2007)\citenamefont{Wu, Pechen, Brif, and
  Rabitz}}]{wu2007controllability}
\bibinfo{author}{\bibfnamefont{R.}~\bibnamefont{Wu}},
  \bibinfo{author}{\bibfnamefont{A.}~\bibnamefont{Pechen}},
  \bibinfo{author}{\bibfnamefont{C.}~\bibnamefont{Brif}}, \bibnamefont{and}
  \bibinfo{author}{\bibfnamefont{H.}~\bibnamefont{Rabitz}},
  \bibinfo{journal}{J. Phys. A} \textbf{\bibinfo{volume}{40}},
  \bibinfo{pages}{5681} (\bibinfo{year}{2007}).

\bibitem[{\citenamefont{Kitaev}(1997)}]{kitaev1997quantum}
\bibinfo{author}{\bibfnamefont{A.~Y.} \bibnamefont{Kitaev}},
  \bibinfo{journal}{Russian Mathematical Surveys}
  \textbf{\bibinfo{volume}{52}}, \bibinfo{pages}{1191} (\bibinfo{year}{1997}).

\bibitem[{\citenamefont{Kitaev et~al.}(2002)\citenamefont{Kitaev, Shen, and
  Vyalyi}}]{kitaev2002classical}
\bibinfo{author}{\bibfnamefont{A.~Y.} \bibnamefont{Kitaev}},
  \bibinfo{author}{\bibfnamefont{A.}~\bibnamefont{Shen}}, \bibnamefont{and}
  \bibinfo{author}{\bibfnamefont{M.~N.} \bibnamefont{Vyalyi}},
  \emph{\bibinfo{title}{Classical and quantum computation}},
  \bibinfo{number}{47} (\bibinfo{publisher}{American Mathematical Soc.},
  \bibinfo{year}{2002}).

\end{thebibliography}

\end{document}